\newtheorem{prop}{Proposition}
\newtheorem{lem}{Lemma}
\newtheorem{thm}{Theorem}
\newtheorem{claim}{Claim}
\title{A Note on `The Limits of Price Discrimination' by Bergemann, Brooks, and Morris\footnote{
I am deeply grateful to my advisor, Michihiro Kandori, and to Stephen Morris for their helpful comments.
}
}
\author{
Keita Kuwahara\thanks{Graduate School of Economics, The University of Tokyo. 
Email: \texttt{kuke0303@g.ecc.u-tokyo.ac.jp}}
}
\date{\today}
\begin{document}

\maketitle

\begin{abstract}
This note revisits the analysis of third-degree price discrimination developed by \cite{bergemann2015limits}, which characterizes the set of consumer-producer surplus pairs that can be achieved through market segmentation. This was proved by means of market segmentation with random prices, but it was claimed that any segmentation with possibly random pricing has a corresponding direct segmentation, where a deterministic price is charged in each market segment.  However, the latter claim is not correct under the definition of market segmentation given in the paper, and we provide counterexamples. We then propose an alternative definition to resolve this issue and examine the implications of the difference between the two definitions in terms of the main result of their paper.
\end{abstract}

\section{Introduction}
\cite{bergemann2015limits} analyze third-degree price discrimination, in which a monopolist segments the aggregate market by acquiring additional information about consumer characteristics and sets different prices for different segments. In particular, they characterize the set of consumer–producer surplus pairs that can be achieved through such segmentation.
They establish this main result using segmentations where the monopolist is allowed to employ stochastic pricing strategies. Nevertheless, Proposition 2 of \cite{bergemann2015limits} states that for any segmentation, even if the monopolist adopts a stochastic pricing strategy, there exists a corresponding direct segmentation. In the direct segmentation, a deterministic price is assigned to each market segment, and the same outcome is achieved.
\par

In this note, however, we present counterexamples to Proposition 2 of \cite{bergemann2015limits} (Section \ref{sec: 観察}). Specifically, we provide two counterexamples: one where the monopolist adopts a stochastic pricing strategy, and one where the strategy is deterministic.  
These counterexamples imply that Proposition 2 of \cite{bergemann2015limits} does not hold in general, and thus one cannot restrict attention to direct segmentations without loss of generality.
To resolve this issue, we proceed to discuss how one might modify the definition of segmentation (Section \ref{sec: revise definition}).
\par

Although Proposition 2 of \cite{bergemann2015limits} is not true under the original definition of direct segmentation, the original definition might ``practically work'' in applications. In particular, we focus on the main result of the paper and show that almost all consumer-producer surplus pairs achievable through general segmentation are also achievable through direct segmentation under the original definition (Section \ref{sec: surplus triangle}). Moreover, it is shown that all such pairs can be exactly achieved, except in the very special case where the demand in the aggregate market has unit elasticity. (i.e., the monopolist is indifferent between charging any one of the consumers' willingness-to-pay levels).

\section{Model}\label{sec: model}
We adopt the model framework introduced by \cite{bergemann2015limits}, and use the same notation throughout. For clarity, we briefly summarize the setting below.
A monopolist sells a good to a continuum of consumers, each demanding a single unit. Normalize the total mass of consumers to one and the constant marginal cost to zero. Consumers may have one of $K \geq 2$ valuations, given by
\[
V \triangleq \{v_1, \ldots, v_k, \ldots, v_K\} \subset \mathbb{R}_+, \quad \text{with } 0 < v_1<\cdots<v_k < \cdots < v_K.
\]
A \emph{market} $x$ is a probability distribution over these $K$ valuations. The set of all markets is:
\[
X \triangleq \Delta(V) = \left\{ x \in \mathbb{R}_+^V \,\middle|\, \sum_{k=1}^K x(v_k) = 1 \right\}.
\]
We denote $x_k \triangleq x(v_k)$, the share of consumers with valuation $v_k$.
Price $v_k$ is \emph{optimal} for market $x$ if it maximizes expected revenue:
\[
v_k \sum_{j=k}^K x_j \geq v_i \sum_{j=i}^K x_j, \quad \text{for all } i = 1, \ldots, K.
\]
Let $X_k\subset X$ denote the set of markets where $v_k$ is optimal.\par

We fix an \emph{aggregate market} $x^* \in X$ and assume $x_k^*>0$ for all $k = 1, \ldots, K$.
A \emph{segmentation} divides the aggregate market into submarkets. Formally, a segmentation $\sigma$ is a probability distribution over $X$. The set of feasible segmentations is
\[
\Sigma \triangleq \left\{ \sigma \in \Delta(X) \,\middle|\, \sum_{x \in \operatorname{supp} \sigma} \sigma(x) \cdot x = x^*,\, |\operatorname{supp} \sigma| < \infty \right\}.
\]
A \emph{pricing rule} for a segmentation $\sigma$ assigns a price distribution to each market in the support of $\sigma$:
\[
\phi : \operatorname{supp} \sigma \to \Delta(V).
\]
Let $\phi_k(x)$ denote the probability of charging price $v_k$ in market $x$. A pricing rule is \emph{optimal} if for each $x$, $v_k \in \operatorname{supp} \phi(x)$ implies $x \in X_k$.
\par

We define a \emph{direct segmentation} as a segmentation that has support on at most $K$ markets, indexed by $k \in A\subset \{1, \ldots, K\}$, such that $x^k \in X_k$.
The \emph{direct pricing rule} assigns probability one to charging price $v_k$ in market $x^k$, that is,
\[
\phi_k(x^k) = 1.
\]
By definition, the direct pricing rule is optimal for direct segmentations, and whenever we refer to a direct segmentation in the subsequent discussion, it is assumed that the monopolist will use direct pricing.
\par

Given a segmentation $\sigma$ and pricing rule $\phi$, consumer surplus is
\[
\sum_{x \in \operatorname{supp} \sigma} \sigma(x) 
\sum_{k=1}^K \phi_k(x) 
\sum_{j=k}^K (v_j - v_k)x_j,
\]
producer surplus is
\[
\sum_{x \in \operatorname{supp} \sigma} \sigma(x) 
\sum_{k=1}^K \phi_k(x) v_k 
\sum_{j=k}^K x_j,
\]
and the total surplus is
\[
\sum_{x \in \operatorname{supp} \sigma} \sigma(x) 
\sum_{k=1}^K \phi_k(x) 
\sum_{j=k}^K v_j x_j.
\]
In particular, the maximum feasible surplus is given by
\[
w^* \triangleq \sum_{j=1}^K v_j x_j^*,
\]
while the producer surplus under uniform pricing is
\[
\pi^* \triangleq \max_{k \in \{1, \ldots, K\}} v_k \sum_{j=k}^K x_j^*.
\]\par

Finally, we define a class of characteristic markets.
Let $\mathcal{V}$ denote the set of all subsets of $V$, and let $\mathcal{V}_k \subset \mathcal{V}$ be the collection of subsets containing $v_k$.  
For each $S \in \mathcal{V}_k$, define a market $x^S$ as the solution to the following conditions:
\begin{equation*}
v_i \sum_{j=i}^K x_j^S = v_k \sum_{j=k}^K x_j^S \quad \text{for all } v_i \in S,
\end{equation*}
and
\begin{equation*}
\sum_{\{j \mid v_j \in S\}} x_j^S = 1.
\end{equation*}
Note that $x^S$ is uniquely determined.
The following result is known from \cite{bergemann2015limits}.
\begin{lem}[Lemma 1 of \cite{bergemann2015limits}]\label{lem: convex hull}
$X_k$ is equal to the convex hull of $\{x^S\}_{S \in \mathcal{V}_k}$
\end{lem}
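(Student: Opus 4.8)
The plan is to exploit that $X_k$ is a compact convex polytope: it is the intersection of the simplex $X$ with the finitely many closed half-spaces $\{x : v_k\sum_{j=k}^K x_j \ge v_i \sum_{j=i}^K x_j\}$, $i = 1, \dots, K$. Since a compact convex polytope is the convex hull of its (finitely many) extreme points, it suffices to prove two things: (i) every $x^S$ with $S \in \mathcal{V}_k$ lies in $X_k$, which by convexity yields $\operatorname{conv}\{x^S\} \subseteq X_k$; and (ii) every extreme point of $X_k$ is of the form $x^S$ for some $S \in \mathcal{V}_k$, which yields the reverse inclusion. Throughout I would pass to the tail-mass coordinates $c_i \triangleq \sum_{j \ge i} x_j$, so that $c_1 = 1$, the vector $(c_i)$ is nonincreasing, the revenue from price $v_i$ is $v_i c_i$, membership in $X_k$ reads $v_i c_i \le v_k c_k$ for all $i$, and $v_i$ lies in the support exactly when $c_i > c_{i+1}$.

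For inclusion (i), I would first record that $x^S$ is a genuine market with support exactly $S$: writing $S = \{v_{i_1} < \dots < v_{i_m}\}$, the equal-revenue conditions force the tail masses $c_{i_l} = v_{i_1}/v_{i_l}$ to be strictly decreasing in $l$ with $c_{i_1} = 1$, so every share is positive. To check that $v_k$ is optimal I would compare $v_i c_i$ at a price $v_i \notin S$ with $v_k c_k$: for $v_i$ below $\min S$ all mass lies above $v_i$, so $v_i c_i = v_i < v_{i_1} = v_k c_k$; for $v_i$ above $\max S$ the revenue is $0$; and for $v_i$ strictly between consecutive elements $v_{i_l}, v_{i_{l+1}}$ of $S$, the absence of mass in the gap gives $c_i = c_{i_{l+1}}$, hence $v_i c_i < v_{i_{l+1}} c_{i_{l+1}} = v_k c_k$. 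This part is routine once the coordinates are set up.

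The substance is inclusion (ii), the characterization of extreme points, which I expect to be the main obstacle. Fix an extreme point $x$, let $S = \operatorname{supp}(x)$, and let $T = \{i : v_i c_i = v_k c_k\}$ be the set of prices tied for optimality (so $k \in T$). The goal is to show $T = S$, since a market whose support $S$ earns a common revenue at every $v_i \in S$ satisfies exactly the defining equations of $x^S$ and hence equals $x^S$ by uniqueness. I would establish $T = S$ through two inequalities. First, $T \subseteq S$: if some optimal price $v_i$ had $x_i = 0$, then $c_i = c_{i+1}$ and $v_{i+1} c_{i+1} > v_i c_i = v_k c_k$ would contradict the optimality of $v_k$, where I use $v_k c_k \ge v_1 c_1 = v_1 > 0$ to rule out the degenerate case $c_i = 0$; so every tied price is in the support. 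Second, $|T| \ge |S|$ at an extreme point, by a dimension count: a perturbation $d = (d_i)$ keeping both $x \pm \varepsilon d$ feasible must satisfy $d_i = d_{i+1}$ at each $i \notin S$ and $v_i d_i = v_k d_k$ at each $i \in T$; the first family confines $d$ to an $(|S|-1)$-dimensional space (a piecewise-constant vector pinned to $0$ below $\min S$ and above $\max S$), and extremality forces the remaining $|T|-1$ tied equations to cut this space down to $\{0\}$, which is impossible unless $|T| - 1 \ge |S| - 1$. Combining the two inequalities gives $T = S$, hence $x = x^S$ with $v_k \in S$, i.e.\ $S \in \mathcal{V}_k$.

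The delicate point in (ii) is making the dimension count rigorous rather than heuristic; the tail-mass parametrization is what makes it clean, because there the active support constraints become the chained equalities $d_i = d_{i+1}$, so the feasible perturbation space is transparently spanned by the levels of $d$ between consecutive support points. Once $T = S$ is secured, identifying $x$ with $x^S$ is immediate, and assembling (i) and (ii) completes both inclusions.
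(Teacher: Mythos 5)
Your proof is correct, but one thing should be flagged first: this note never proves the statement itself --- it imports it as ``known from \cite{bergemann2015limits}'' --- so the only proof to compare against is the original one in that reference, not anything in this paper. On correctness: your part (i) is right, and the tail-mass computation $c_{i_l} = v_{i_1}/v_{i_l}$ both verifies $x^S \in X_k$ and delivers the uniqueness of $x^S$ that you invoke later. Your part (ii) is also sound, and the dimension count is rigorous, not just heuristic: at an extreme point the feasible two-sided perturbation directions are exactly the solutions of the linearized active constraints, namely $d_1 = 0$ (mass normalization), $d_i = d_{i+1}$ for $i \notin S$ (active support constraints, with $d_{K+1} \triangleq 0$), and $v_i d_i = v_k d_k$ for $i \in T$; the first two families leave a space of dimension $|S|-1$ (one free level per block ending at a support point other than $\min S$), and the remaining $|T|-1$ nontrivial homogeneous equations can reduce it to $\{0\}$ only if $|T|-1 \geq |S|-1$. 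Together with $T \subseteq S$ (your positivity argument correctly disposes of the $c_i = 0$ case) this gives $T = S$, hence $x = x^S$ with $v_k \in S$.

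Where you genuinely diverge from Bergemann, Brooks, and Morris: their proof of the nontrivial inclusion is constructive. Given $x \in X_k$, they iteratively split off the extremal market $x^S$ indexed by the current set $S$ of optimal prices, putting on it the largest weight for which the rescaled residual remains a market in $X_k$; at that critical weight either a valuation drops out of the support or a new price joins the tie set, so the procedure terminates and yields explicit convex weights. Your route replaces that algorithm with abstract polytope theory (Minkowski's theorem plus identification of the extreme points via active constraints). The trade-off: your argument is shorter and isolates cleanly \emph{why} the $x^S$ are exactly the extreme points, but it is non-constructive --- it asserts a decomposition exists without exhibiting one --- whereas the original argument doubles as a procedure for actually computing a segmentation, which is what the economic application (and the constructions in this note's appendix) ultimately rely on.
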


\section{Proposition 2 of \cite{bergemann2015limits}}\label{sec: 観察}
The following claim is a part of the statement of Proposition 2 of \cite{bergemann2015limits}.
\begin{claim}[Part of Proposition 2 of \cite{bergemann2015limits}]\label{prop: incorrect claim}
For any segmentation and optimal pricing rule $(\sigma, \phi)$, there exists a direct segmentation $\sigma'$ (and associated direct pricing rule $\phi'$) 
that achieves the same joint distribution over valuations and prices. 
As such, it achieves the same producer surplus, consumer surplus, and total surplus.
\end{claim}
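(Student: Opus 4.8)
The plan is to construct the direct segmentation by \emph{aggregating over the realized price}. For each index $k$, let $q_k \triangleq \sum_{x \in \operatorname{supp}\sigma} \sigma(x)\phi_k(x)$ denote the total probability that price $v_k$ is charged, and let $A \triangleq \{k : q_k > 0\}$. For each $k \in A$, define the pooled market
\[
x^k \triangleq \frac{1}{q_k} \sum_{x \in \operatorname{supp}\sigma} \sigma(x)\,\phi_k(x)\, x,
\]
the average of all submarkets weighted by the mass of their consumers who face price $v_k$. I would then let $\sigma'$ place probability $q_k$ on each $x^k$ and take $\phi'$ to be the induced direct pricing rule $\phi'_k(x^k) = 1$.

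First I would check that each $x^k$ genuinely lies in $X_k$. Because $\phi$ is optimal, $\phi_k(x) > 0$ forces $x \in X_k$, so $x^k$ is a convex combination of points of $X_k$; since Lemma~\ref{lem: convex hull} exhibits $X_k$ as a convex hull, it is convex, and hence $x^k \in X_k$. Next I would verify feasibility: summing the defining equations gives $\sum_{k \in A} q_k x^k = \sum_{x} \sigma(x)\, x \sum_{k} \phi_k(x) = \sum_{x} \sigma(x)\, x = x^*$, using $\sum_{k} \phi_k(x) = 1$, so the pooled markets reconstitute the aggregate market and $\operatorname{supp}\sigma'$ is finite. Finally, the joint law of $(\text{valuation},\text{price})$ is preserved price by price: under $(\sigma,\phi)$ the pair $(v_j, v_k)$ has probability $\sum_{x} \sigma(x)\phi_k(x) x_j$, while under $(\sigma',\phi')$ it has probability $q_k\, x^k_j$, and these coincide by the definition of $x^k$. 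Matching joint distributions immediately yields identical consumer, producer, and total surplus, since each is a linear functional of that distribution.

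The step I expect to be the crux is confirming that $\sigma'$ is actually a \emph{direct} segmentation in the precise sense of the definition. A segmentation is a distribution over the \emph{set} $X$, so its support is a collection of distinct market points, and a direct segmentation must carry exactly one deterministic price $v_k$ on the point labeled $x^k$. The aggregation above assigns the label $k$ to the pooled market $x^k$, but nothing in the construction prevents two distinct prices from pooling to the \emph{same} point, i.e. $x^k = x^{k'}$ with $k \neq k'$; this can occur precisely when positive mass is drawn from the overlap region $X_k \cap X_{k'}$, where several prices are simultaneously optimal. When such a collision happens, a single point of $\operatorname{supp}\sigma'$ would have to be charged two different deterministic prices at once, which the direct pricing rule cannot represent. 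I would therefore expect the argument to go through cleanly whenever the pooled markets are pairwise distinct, and to require genuine care — or to fail outright — exactly on the boundary markets shared by several $X_k$. This is the point at which I would scrutinize the definitions most closely before believing the claim in full generality.
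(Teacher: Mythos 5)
The step you single out as the crux is indeed where the argument dies, and it cannot be repaired: Claim \ref{prop: incorrect claim} is false under the definitions of Section \ref{sec: model}, and the paper's treatment of this statement is a refutation by counterexample, not a proof. Take $V=\{1,2\}$ and $x^*=\left(\frac{1}{2},\frac{1}{2}\right)$, with $\sigma(x^*)=1$ and $\phi_1(x^*)=\phi_2(x^*)=\frac{1}{2}$; both prices are optimal for $x^*$, so $\phi$ is an optimal pricing rule. Your pooled markets are then $x^1=x^2=x^*$, which is exactly the collision you anticipate. What your proposal leaves open is whether such a collision merely defeats your particular construction or defeats \emph{every} construction; it is the latter. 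If any direct segmentation $(\sigma',\phi')$ matched the joint distribution of valuations and prices, then for each $k$ the market at which it charges $v_k$ would have to equal the conditional distribution of valuations given price $v_k$ under $(\sigma,\phi)$, i.e., precisely your pooled $x^k$. Since a segmentation is a distribution over the \emph{set} $X$, the single support point $x^*$ would then have to be assigned two different deterministic prices by $\phi'$, a contradiction. The paper's Example 2 exhibits the same failure with $K=3$ and a fully deterministic $\phi$, so randomization is not the culprit either: collisions can arise purely from pooling distinct segments that happen to average to the same market.

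Your positive observation, that the aggregation succeeds if and only if the pooled markets $\{x^k\}_{k\in A}$ are pairwise distinct, coincides with the paper's own diagnosis in Section \ref{sec: 観察}, and it is what the paper later exploits to prove Theorem \ref{thm: direct achievable condition} (the surplus triangle remains attainable by direct segmentations whenever $x^*\neq x^V$). But as a proof of the claim as stated, the proposal has an unfixable gap: the correct conclusion is that the claim is false, and the remedy is definitional rather than technical, namely redefining segmentations as distributions over $X\times V$ as in Section \ref{sec: revise definition}, so that two segments with identical valuation distributions but different assigned prices count as distinct objects.
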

However, this claim is incorrect. The following simple counterexample shows that it does not hold in general.\\[1em]
\textbf{Example 1:}
Let $V = \{1, 2\}$ and $x^* = \left( \frac{1}{2}, \frac{1}{2} \right)$. Define $\sigma$ and $\phi$ as follows:
\[
\sigma\left(x^*\right) = 1, \quad
\phi_1\left(x^*\right) = \frac{1}{2}, \quad
\phi_2\left(x^*\right) = \frac{1}{2}.
\]
Note that $\phi$ is an optimal pricing rule.
We show that there is no direct segmentation that achieves the same joint distribution over valuations and prices as $(\sigma, \phi)$.
To derive a contradiction, suppose there exists a direct segmentation $\sigma'$ and an associated direct pricing rule $\phi'$ that achieve the same joint distribution.\par
Since the price under $(\sigma, \phi)$ can take the values $1$ or $2$, and the marginal distribution of prices under $(\sigma', \phi')$ must coincide with that under $(\sigma, \phi)$, it follows that $\sigma'(x^1) > 0$, $\sigma'(x^2) > 0$ for some $x^1$ and $x^2$.
Note that $\phi'_1(x^1) = 1$ and $\phi'_2(x^2) = 1$.  
Moreover, for $k = 1, 2$, the distribution over valuations conditional on price $v_k$ under $(\sigma, \phi)$ must coincide with that under $(\sigma', \phi')$, that is, $x^k$.
Now, under $(\sigma, \phi)$, the distribution over valuations conditional on price $1$ is
\[
\frac{1}{\sum_{x \in \operatorname{supp} \sigma} \sigma(x) \phi_1(x)} \sum_{x \in \operatorname{supp} \sigma} \sigma(x) \phi_1(x) \cdot x = x^*.
\]
Similarly, the distribution over valuations conditional on price $2$ is
\[
\frac{1}{\sum_{x \in \operatorname{supp} \sigma} \sigma(x) \phi_2(x)} \sum_{x \in \operatorname{supp} \sigma} \sigma(x) \phi_2(x) \cdot x = x^*.
\]
Hence, we must have $x^1 = x^2 = x^*$. However, this leads to a contradiction:
\[
\phi'(x^1) \neq \phi'(x^2) = \phi'(x^1).
\]\\\par
One might suspect that, in Example 1, the failure to construct a corresponding direct segmentation was due to the fact that the monopolist's pricing strategy is stochastic.
However, this is not the case.  
The following example demonstrates that even when the pricing strategy is deterministic, there may still be no corresponding direct segmentation that yields the same joint distribution over valuations and prices.
\\[1em]
\textbf{Example 2:} Let $V = \{1, 2, 3\}$ and $x^* = \left( \frac{1}{2}, \frac{1}{3}, \frac{1}{6} \right)$. Define $\sigma$ and $\phi$ as follows:
\begin{alignat*}{2}
\sigma\left(\left(\frac{1}{2}, \frac{1}{6}, \frac{1}{3}\right)\right) &= \frac{1}{4}, \quad
\sigma\left(\left(\frac{1}{2}, \frac{1}{2}, 0\right)\right) &= \frac{1}{4}, \quad
\sigma\left(\left(\frac{1}{2}, \frac{1}{3}, \frac{1}{6}\right)\right) &= \frac{1}{2}, \\
\phi_1\left(\left(\frac{1}{2}, \frac{1}{6}, \frac{1}{3}\right)\right) &= 1, \quad
\phi_1\left(\left(\frac{1}{2}, \frac{1}{2}, 0\right)\right) &= 1, \quad
\phi_2\left(\left(\frac{1}{2}, \frac{1}{3}, \frac{1}{6}\right)\right) &= 1.
\end{alignat*}
Note that $\phi$ is an optimal pricing rule.
Similar to Example 1, we show that there is no direct segmentation that achieves the same joint distribution over valuations and prices as $(\sigma, \phi)$.
To derive a contradiction, suppose there exists a direct segmentation $\sigma'$ and an associated direct pricing rule $\phi'$ that achieve the same joint distribution.\par
Since the price under $(\sigma, \phi)$ can only take the values $1$ or $2$, and the marginal distribution of prices under $(\sigma', \phi')$ must coincide with that under $(\sigma, \phi)$, it follows that $\sigma'(x^1) > 0$, $\sigma'(x^2) > 0$, and $\sigma'(x^1) + \sigma'(x^2) = 1$ for some $x^1$ and $x^2$.
Note that $\phi'_1(x^1) = 1$ and $\phi'_2(x^2) = 1$.  
Moreover, for $k = 1, 2$, the distribution over valuations conditional on price $v_k$ under $(\sigma, \phi)$ must coincide with that under $(\sigma', \phi')$, that is, $x^k$.
Now, under $(\sigma, \phi)$, the distribution over valuations conditional on price $1$ is
\[
\frac{1}{\sum_{x \in \operatorname{supp} \sigma} \sigma(x) \phi_1(x)} \sum_{x \in \operatorname{supp} \sigma} \sigma(x) \phi_1(x) \cdot x = 2 \left( \frac{1}{4} \left( \frac{1}{2}, \frac{1}{6}, \frac{1}{3} \right) + \frac{1}{4} \left( \frac{1}{2}, \frac{1}{2}, 0 \right) \right) = \left( \frac{1}{2}, \frac{1}{3}, \frac{1}{6} \right).
\]
Similarly, the distribution over valuations conditional on price $2$ is
\[
\frac{1}{\sum_{x \in \operatorname{supp} \sigma} \sigma(x) \phi_2(x)} \sum_{x \in \operatorname{supp} \sigma} \sigma(x) \phi_2(x) \cdot x = 2 \left( \frac{1}{2} \left( \frac{1}{2}, \frac{1}{3}, \frac{1}{6} \right) \right) = \left( \frac{1}{2}, \frac{1}{3}, \frac{1}{6} \right).
\]
Hence, we must have $x^1 = x^2 = \left( \frac{1}{2}, \frac{1}{3}, \frac{1}{6} \right)$. However, this leads to a contradiction:
\[
\phi'(x^1) \neq \phi'(x^2) = \phi'(x^1).
\]\\\par

Building on Examples 1 and 2, we now consider the general case for any $(\sigma, \phi)$.
In the proof of Proposition 2 of \cite{bergemann2015limits}, the direct segmentation $\sigma'$ is constructed as follows:
\[
\sigma'(x^k) \triangleq \sum_{x \in \operatorname{supp} \sigma} \sigma(x) \phi_k(x),
\]
and hence,
\[
x^k \triangleq \frac{1}{\sigma'(x^k)} \sum_{x \in \operatorname{supp} \sigma} \sigma(x) \phi_k(x) \cdot x.
\]
As demonstrated in Examples 1 and 2, if a corresponding direct segmentation exists, then $x^k$ supported by it must necessarily take this specific form.
Therefore, if the resulting $\{x^k\}$ constructed in this way are not all distinct, no corresponding direct segmentation exists.
Conversely, if the $\{x^k\}$ are all distinct, then the direct segmentation $\sigma'$ constructed in this way yields the same joint distribution over valuations and prices.\par

\section{Proposal for Revised Definitions}\label{sec: revise definition}
One way to make Claim \ref{prop: incorrect claim} true is to appropriately revise the relevant definitions. 
In this section, we consider modifying the definition of segmentation. As discussed in Section \ref{sec: 観察}, the root cause of the issue lies in the fact that the monopolist cannot distinguish between two market segments that share the same distribution of valuations. To address this, we redefine a segmentation $\sigma$ as a distribution over $X \times V$ satisfying the following conditions:
\[
\sum_{(x, v_k) \in \operatorname{supp} \sigma} \sigma((x, v_k)) \cdot x = x^* \quad \text{and} \quad |\operatorname{supp} \sigma| < \infty.
\]
In addition, we redefine $\phi : \operatorname{supp} \sigma \to \Delta(V)$ to satisfy the following condition: for all $(x, v_k) \in \operatorname{supp} \sigma$, we have $$\phi_k((x, v_k)) = 1.$$
Note that, given $\sigma$, $\phi$ is uniquely determined by it.
For any $(\sigma', \phi')$ under the original definition, one can construct $\sigma \in \Delta(X \times V)$ under the revised definition so as to achieve the same joint distribution over valuations and prices by setting
\[
\sigma((x, v_k)) \triangleq \sigma'(x) \phi'_k(x).
\]
Therefore, segmentation under this definition is as flexible as that under the original definition.
We also note that this modification does not affect the main result of \cite{bergemann2015limits} (Theorem 1 in the paper).\par

Since we have modified the definition of segmentation, we also redefine direct segmentation for consistency.  
A direct segmentation $\sigma$ is a segmentation supported on at most $K$ markets, indexed by $k \in A \subset \{1, \ldots, K\}$, such that $(x^k, v_k) \in \operatorname{supp} \sigma$.
Under this definition, Claim \ref{prop: incorrect claim} becomes true. For any segmentation $\sigma$, a corresponding direct segmentation $\sigma'$ can be constructed as follows:
\[
\sigma'((x^k, v_k)) \triangleq \sum_{(x, v_\ell) \in \operatorname{supp} \sigma\,:\, v_\ell = v_k} \sigma((x, v_\ell)),
\]
and hence,
\[
x^k \triangleq \frac{1}{\sigma'((x^k, v_k))} \sum_{(x, v_\ell) \in \operatorname{supp} \sigma\,:\, v_\ell = v_k} \sigma((x, v_\ell)) \cdot x.
\]
In both Example 1 and 2, we can construct the direct segmentation $\sigma'$ by setting $\sigma'((x^*, 1)) = \sigma'((x^*, 2)) = \frac{1}{2}$.
\par

The above discussion has focused on modifications to the definitions. In contrast, Section \ref{sec: surplus triangle} presents results concerning what happens if we proceed with the analysis without altering the original definitions.

\section{Surplus Triangle}\label{sec: surplus triangle}
The main result of \cite{bergemann2015limits}, stated as Theorem 1, is as follows.
\begin{thm}[Theorem 1 of \cite{bergemann2015limits}]\label{thm: Surplus Triangle in bergemann2015limits}
There exists a segmentation and optimal pricing rule with consumer surplus $u$ 
and producer surplus $\pi$ if and only if $u \geq 0$, $\pi \geq \pi^*$, 
and $u + \pi \leq w^*$.
\end{thm}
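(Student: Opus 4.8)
The plan is to prove the two inclusions separately, showing that the achievable set of surplus pairs equals the triangle with vertices $(0,\pi^*)$, $(0,w^*)$, and $(w^*-\pi^*,\pi^*)$. For the \emph{only if} direction I would read the three inequalities directly off the surplus formulas. Non-negativity $u\geq 0$ is immediate, since consumer surplus is a sum of terms $(v_j-v_k)x_j$ with $j\geq k$, each non-negative. The bound $u+\pi\leq w^*$ follows because total surplus is $\sum_x \sigma(x)\sum_k\phi_k(x)\sum_{j\geq k}v_jx_j\leq \sum_x\sigma(x)\sum_j v_jx_j=w^*$, the inequality being the exclusion of consumers valued below the charged price. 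For $\pi\geq\pi^*$, let $v_{k^*}$ attain $\pi^*$ on $x^*$; optimality of $\phi$ makes the revenue in each segment $x$ equal to $\max_k v_k\sum_{j\geq k}x_j\geq v_{k^*}\sum_{j\geq k^*}x_j$, and averaging against $\sigma$ with $\sum_x\sigma(x)x=x^*$ gives $\pi\geq v_{k^*}\sum_{j\geq k^*}x^*_j=\pi^*$.

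For the \emph{if} direction I would first observe that the achievable set is convex, so that it suffices to realize the three vertices. Given pairs achieved by $(\sigma_1,\phi_1)$ and $(\sigma_2,\phi_2)$, the mixture $\lambda\sigma_1+(1-\lambda)\sigma_2$ is again a feasible segmentation, since the barycenter condition and finiteness of support are both preserved; combining the pricing rules, and mixing the price distributions on any market lying in both supports, yields an optimal pricing rule, because a convex combination of prices optimal for a fixed market is again optimal. As all three surplus expressions are linear in $(\sigma,\phi)$, the mixture attains the corresponding convex combination of surplus pairs.

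It then remains to hit the vertices. The point $(0,w^*)$ is achieved by perfect price discrimination: split $x^*$ into its $K$ degenerate markets and charge each its single valuation, so every consumer buys at their valuation. For the other two vertices I would invoke Lemma \ref{lem: convex hull}. Since $v_{k^*}$ is optimal for the aggregate market, $x^*\in X_{k^*}=\operatorname{conv}\{x^S:S\in\mathcal{V}_{k^*}\}$, so $x^*$ is a finite convex combination of extremal markets $x^S$ with $v_{k^*}\in S$; take this combination as the segmentation. In each $x^S$ all prices in $S$ are optimal and yield the same revenue, in particular the revenue from $v_{k^*}$. Charging the smallest valuation $\min S$ in segment $x^S$ is then optimal, sells to the entire segment (so total surplus is $w^*$), and by this indifference contributes exactly the revenue $v_{k^*}$ would; summing and using $\sum_S\sigma(x^S)x^S=x^*$ gives producer surplus $\pi^*$, hence the pair $(w^*-\pi^*,\pi^*)$. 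Charging instead the largest valuation $\max S$ in each segment sells only to consumers whose valuation equals the price, giving $u=0$ while preserving producer surplus $\pi^*$ by the same indifference, hence $(0,\pi^*)$.

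The conceptual core, and the step I expect to demand the most care, is the extremal construction for $(w^*-\pi^*,\pi^*)$: decomposing $x^*$ into the extreme points of $X_{k^*}$ and charging the \emph{lowest} valuation present in each segment simultaneously secures full efficiency and holds the producer to the uniform-monopoly profit, precisely because each $x^S$ is built to make the monopolist indifferent across every price in $S$. The surrounding bookkeeping — feasibility of the mixture and optimality of the combined pricing rule when supports overlap — is routine, but worth stating explicitly since pricing rules are permitted to be stochastic.
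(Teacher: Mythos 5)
Your proposal is correct and takes essentially the same route as the source: the paper states this theorem without proof, as Theorem 1 of \cite{bergemann2015limits}, and that original argument (reused in the paper's proof of Lemma \ref{lem: pi > pi star}) rests on exactly your ingredients --- the vertices $(0,w^*)$, $(w^*-\pi^*,\pi^*)$, and $(0,\pi^*)$ achieved by perfect price discrimination and by an extremal segmentation obtained from Lemma \ref{lem: convex hull} under the minimum and maximum pricing rules, followed by convex mixing of segmentations and pricing rules. The one detail to pin down is that when two segmentations' supports overlap, the mixed pricing rule at a common market $x$ must weight $\phi_1(x)$ and $\phi_2(x)$ in proportion to $\lambda\sigma_1(x)$ and $(1-\lambda)\sigma_2(x)$ rather than $\lambda$ and $1-\lambda$ (surplus is bilinear, not linear, in $(\sigma,\phi)$); with that normalization the joint distribution over markets and prices mixes linearly and your convexity step goes through exactly as claimed.
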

Let $T \triangleq \{(u, \pi) : u \geq 0,\ \pi \geq \pi^*,\ u + \pi \leq w^*\}$.
If Claim \ref{prop: incorrect claim} were correct, then by Theorem \ref{thm: Surplus Triangle in bergemann2015limits}, the set of consumer–producer surplus pairs achievable through direct segmentation would coincide with $T$.
However, under the definitions provided in Section \ref{sec: model}, we have found counterexamples to Claim \ref{prop: incorrect claim}. This naturally raises the following question:  
Does the set of achievable $(u, \pi)$ pairs under direct segmentation still coincide with $T$?
The following theorem addresses this question.
\begin{thm}\label{thm: direct achievable condition}
The set of consumer–producer surplus pairs achievable through direct segmentation coincides with $T$ if and only if $x^* \neq x^V$.
\end{thm}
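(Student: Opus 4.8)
The plan is to record the trivial inclusion first and then prove the two implications separately. Since a direct segmentation together with its direct pricing rule is a particular segmentation with optimal pricing, Theorem~\ref{thm: Surplus Triangle in bergemann2015limits} immediately gives that the set $D$ of surplus pairs achievable by direct segmentation satisfies $D \subseteq T$; the whole content is to decide when $D \supseteq T$. Two facts will organize everything. Writing $r_k(x) \triangleq v_k\sum_{j\ge k}x_j$ for the revenue of price $v_k$ in market $x$, linearity of $r_\ell$ in $x$ gives, for any segmentation with induced markets $x^k$ and weights $\sigma_k$ and any fixed price $v_\ell$, the bound $\pi = \sum_k \sigma_k r_k(x^k) \ge \sum_k \sigma_k r_\ell(x^k) = r_\ell(x^*)$; maximizing over $\ell$ recovers $\pi \ge \pi^*$, and equality forces every active segment to keep the uniform-optimal price optimal. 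Second, the only market in which every price is optimal is the unit-elastic market, i.e. $\bigcap_k X_k = \{x^V\}$, since $r_1 = \cdots = r_K$ together with $\sum_j x_j = 1$ has the unique full-support solution $x^V$.

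For the implication ``$D = T \Rightarrow x^* \neq x^V$'' I would argue the contrapositive: if $x^* = x^V$, then every price is optimal at $x^*$, so $r_\ell(x^*) = \pi^*$ for all $\ell$ (in particular $\pi^* = r_1(x^V) = v_1$). Hence for any direct segmentation with $\pi = \pi^*$ the bound above is an equality for \emph{every} $\ell$, forcing $r_\ell(x^k) = r_k(x^k)$ for all active $k$ and all $\ell$, i.e. every active segment lies in $\bigcap_\ell X_\ell = \{x^V\} = \{x^*\}$. The segmentation therefore collapses onto the single market $x^*$ and, by directness, uses a single price $v_k$, so the only surplus pairs with $\pi = \pi^*$ in $D$ are the finitely many points $\bigl(\sum_{j>k}(v_j - v_k)x_j^*,\ \pi^*\bigr)$, $k = 1,\dots,K$. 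Since $w^* - \pi^* = w^* - v_1 > 0$, the bottom edge of $T$ is a nondegenerate segment and cannot be covered by finitely many points, so $D \subsetneq T$.

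For the converse ``$x^* \neq x^V \Rightarrow D = T$'' I would show that every $(u,\pi)\in T$ is achieved by a segmentation whose induced markets $x^k = \tfrac{1}{\sigma'_k}\sum_x \sigma(x)\phi_k(x)\,x$ (with $\sigma'_k = \sum_x \sigma(x)\phi_k(x)$) are pairwise distinct, so that the reduction of Section~\ref{sec: 観察} makes it genuinely direct. The efficient frontier $u+\pi = w^*$ is covered by perfect discrimination (the singleton markets $e_k$ charged at $v_k$, plainly distinct, giving $(0,w^*)$) and by the consumer-optimal segmentation of Bergemann--Brooks--Morris (giving $(w^*-\pi^*,\pi^*)$, which can be taken with distinct segments); interior points with $u+\pi < w^*$ are reached by introducing inefficiency while keeping the $x^k$ distinct, coincidence of two induced markets being a positive-codimension condition that a generic achieving segmentation avoids whenever it is not forced. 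The delicate locus is the bottom edge $\pi = \pi^*$: there the equality analysis pins every active segment into $X_k \cap \bigcap_{\ell\in O}X_\ell$, where $O = \{\ell : x^*\in X_\ell\}$, and positive consumer surplus requires at least one low-price segment serving higher types. Generalizing the two-type case, I would put weight on $x^V$ — charged at a low price, so it trades efficiently but at revenue rate strictly below $\pi^*$ — and offset it with high-extraction segments so the total producer surplus returns to exactly $\pi^*$; varying the weight on $x^V$ sweeps consumer surplus continuously over $[0,w^*-\pi^*]$. The decisive point is that, because $x^* \neq x^V$, this filler market $x^V$ is distinct from the high-extraction segments, so the collapse that doomed the previous case is avoided.

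The step I expect to be the main obstacle is exactly this bottom-edge construction for general $K$: exhibiting an explicit one-parameter family of direct segmentations with $\pi = \pi^*$ whose consumer surplus ranges over all of $[0,w^*-\pi^*]$, and verifying that the finitely many induced markets stay pairwise distinct for every value of the parameter. The two-type calculation (Example~1 with a perturbed $x^*$) already displays the mechanism — with $x^* \neq x^V$ the off-price segment is forced to $x^V$ while the on-price segment stays away from it — but for $K \ge 3$ several prices may be simultaneously active on the bottom edge, and checking distinctness across all of them, together with identifying the set $O$, is where the real bookkeeping lies.
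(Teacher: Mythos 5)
Your ``only if'' direction is essentially correct and is the same argument as the paper's (Lemma \ref{lem: K points when x^*=x^V}): when $x^* = x^V$, the equality $\pi = \pi^*$ forces every active segment into $\bigcap_\ell X_\ell = \{x^V\}$, the segmentation collapses to the single market $x^*$ carrying a single deterministic price, and the finitely many resulting points cannot cover the nondegenerate bottom edge of $T$. Your preliminary facts ($D \subseteq T$, the revenue-comparison inequality, and $\bigcap_\ell X_\ell = \{x^V\}$) are all fine.

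The genuine gap is in the ``if'' direction, and it sits exactly where you predicted. First, for $\pi > \pi^*$ you appeal to genericity (``coincidence of two induced markets is a positive-codimension condition''), but this is not a proof: you must exhibit a segmentation that hits $(u,\pi)$ exactly, satisfies the aggregation constraint, and has pairwise distinct induced markets. The paper does this constructively (Lemma \ref{lem: pi > pi star}) by mixing in, with weight $\alpha>0$, the singleton-market segmentation $\sigma'(x^{\{v_k\}}) = x^*_k$; the strict optimality of $v_k$ at $x^{\{v_k\}}$ then makes $v_k$ the unique optimal price of each induced market $x^k$, so distinctness is automatic. Second, your bottom-edge plan is flawed in its stated form: if $v_1$ is optimal for $x^*$, then every price in $x^V$ earns exactly $v_1 = \pi^*$, not ``strictly below $\pi^*$,'' so there is nothing to offset; and in the two-type case, putting weight on $x^V$ charged at a \emph{low} price while the complement is also charged low yields consumer surplus identically equal to $w^* - \pi^*$, so varying the weight sweeps nothing. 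The mechanism that actually works (and which you state, inconsistently with your plan, only in your last paragraph) is the reverse: $x^V$ must serve as the \emph{high}-price, off-price segment, killing consumer surplus at unchanged revenue. More seriously, for $K \geq 3$ the equality constraint $x^k \in X_k \cap \bigcap_{\ell \in O} X_\ell$ with $|O| \geq 2$ can force the two extreme induced markets (those charged $\min O$ and $\max O$) to have the \emph{same} optimal-price set, so distinctness cannot be read off from optimal prices at all; the paper needs the geometric apparatus of Lemmas \ref{lem: x lies conv}--\ref{lem: opt price if and if} (expressing $x^*$ as a relative-interior point of $\operatorname{conv}(\{x^S\}_{P \subset S \subset V})$), and, in the hardest sub-case $\min P = v_1$, $\max P = v_K$, $P \neq V$, a separate two-segment construction that perturbs the weights on two sets $S_1 \neq S_2 \supset P$ and invokes the intermediate value theorem (Lemma \ref{lem: change u, no change パイ}). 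Your proposal correctly names this obstacle but does not overcome it, so the ``if'' direction remains unproved.
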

When $x^* = x^V$, the set of consumer–producer surplus pairs achievable through direct segmentation is characterized by the following proposition.
\begin{prop}\label{prop: direct achievable when $x^* = x^V$}
Let $x^* = x^V$.  
There exists a direct segmentation with consumer surplus $u$ and producer surplus $\pi$ if and only if either:
\begin{enumerate}
    \item $u \geq 0$, $\pi > \pi^*$, and $u + \pi \leq w^*$, or
    \item $\pi = \pi^*$ and
    \[
    u = \sum_{j = k}^K (v_j - v_k) x^*_j \quad \text{for some } k.
    \]
\end{enumerate}
\end{prop}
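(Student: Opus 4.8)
The plan is to prove the equivalence by treating the regimes $\pi=\pi^*$ and $\pi>\pi^*$ separately, after recording two facts special to $x^*=x^V$: since every price earns the common revenue $v_1$ in $x^V$ we have $\pi^*=v_1$, and since every $v_k$ is optimal for $x^V$ we have $x^V\in X_k$ for all $k$. Write $u^{(k)}\triangleq\sum_{j=k}^{K}(v_j-v_k)x_j^*$ for the consumer surplus obtained by charging the single price $v_k$ on the whole market, and note $u^{(1)}=w^*-\pi^*$ and $u^{(K)}=0$. Let $e_k\in X$ denote the market placing unit mass on $v_k$. For the ``only if'' direction I would first invoke the containment $(u,\pi)\in T$, which follows from Theorem~\ref{thm: Surplus Triangle in bergemann2015limits} since a direct segmentation is a special segmentation; in particular $\pi\ge\pi^*$, because charging $v_1$ in any segment sells to the whole segment and already earns $v_1=\pi^*$. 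The crux is then to show that $\pi=\pi^*$ forces $u\in\{u^{(1)},\dots,u^{(K)}\}$, i.e.\ case (2).

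So suppose a direct segmentation has active segments $\{(x^k,\sigma_k)\}_{k\in A}$ with $\pi=\pi^*$. The same observation gives revenue at least $v_1$ in each segment, so equality of the $\sigma$-weighted average forces each active segment to earn exactly $v_1$; hence $v_i\sum_{j\ge i}x^k_j\le v_1$ for all $i$, that is $x^k\in X_1$. Substituting $\sum_{k}\sigma_k x^k=x^V$ into the identity $v_i\sum_{j\ge i}x^V_j=v_1$ and using that every summand is $\le v_1$, I obtain $v_i\sum_{j\ge i}x^k_j=v_1$ for all $i$ and all active $k$. These are precisely the defining equations of $x^V$, which has a unique solution, so $x^k=x^V$ for every $k\in A$. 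Since the original definition assigns a single price to each market in the support, the active segments must be distinct, forcing $|A|=1$: the segmentation charges one price $v_k$ on $x^V$ and yields $u=u^{(k)}$, as claimed.

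For the ``if'' direction, case (2) is immediate, since serving $x^V$ at the single price $v_k$ is a one-segment direct segmentation (valid because $x^V\in X_k$) achieving $(u^{(k)},\pi^*)$. The substantive part is case (1), where I would cover $T\cap\{\pi>\pi^*\}$ using a fan of segmentations together with a merging argument. Let $\mathrm{PD}$ be perfect discrimination (market $e_k$ priced $v_k$ with weight $x^*_k$), achieving $(0,w^*)$. For $m\in\{1,K\}$ and $\lambda\in(0,1)$ I merge $\mathrm{PD}$ with the single-price-$v_m$ segmentation into one direct segmentation: at price $v_m$ I use the market proportional to $\lambda x^*_m e_m+(1-\lambda)x^V$, which lies in $X_m$ by convexity of $X_m$ (Lemma~\ref{lem: convex hull}), and I keep $e_k$ at every other price. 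Because surplus is linear in the market for a fixed price and the total mass is merely regrouped, this segmentation respects the aggregate constraint and achieves $\lambda(0,w^*)+(1-\lambda)(u^{(m)},\pi^*)$; taking $m=1$ traces the hypotenuse $u+\pi=w^*$ and $m=K$ the left edge $u=0$, at every height $\pi=\lambda w^*+(1-\lambda)\pi^*$.

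Finally, to fill the horizontal slice at a given height $\pi^\dagger\in(\pi^*,w^*)$, I would fix the corresponding $\lambda$ and merge the two endpoint segmentations (the hypotenuse one and the left-edge one) with weight $\mu\in[0,1]$, again averaging the markets sharing a price; by linearity this achieves $\mu(w^*-\pi^\dagger,\pi^\dagger)+(1-\mu)(0,\pi^\dagger)$, so $u$ sweeps all of $[0,w^*-\pi^\dagger]$, while $\pi^\dagger=w^*$ is covered by $\mathrm{PD}$ alone. The point I expect to require the most care---and the main obstacle---is the one-price-per-market (distinctness) requirement of the original definition: I must check that the merged construction never collapses two segments onto the same valuation distribution. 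This resolves cleanly here, because the markets carrying prices $v_1$ and $v_K$ are convex combinations $\alpha e_1+(1-\alpha)x^V$ and $\beta e_K+(1-\beta)x^V$ with $\alpha,\beta>0$, whose $v_1$-coordinates differ by $\alpha(1-x^V_1)+\beta x^V_1>0$, while the remaining segments are distinct point masses; everything else reduces to the linearity of surplus and the convexity of $X_k$ from Lemma~\ref{lem: convex hull}.
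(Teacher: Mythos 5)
Your proof is correct, and while your necessity argument coincides with the paper's, your sufficiency construction takes a genuinely different route. For necessity, your crux argument (each active segment earns at least $v_1=\pi^*$, equality on average forces equality segment-by-segment, the aggregate constraint then forces $v_i\sum_{j\ge i}x^k_j=v_1$ for every $i$ and every active segment, and uniqueness of the solution of these equations gives $x^k=x^V$, hence a single segment) is essentially identical to the paper's Lemma \ref{lem: K points when x^*=x^V}, differing only in that you conclude $|A|=1$ directly where the paper phrases it as a contradiction with $|\operatorname{supp}\sigma|\ge 2$. For sufficiency in case (1), however, the paper uses Lemma \ref{lem: pi > pi star}, which is stated for \emph{arbitrary} $x^*$: it takes the mixture $\sigma''=\alpha\sigma'+(1-\alpha)\sigma$ of perfect discrimination with the extremal segmentation supported on $\{x^S\}_{S\in\mathcal{V}_k}$, prices it with a $\beta$-mixture of the minimum and maximum pricing rules, and then appeals to the reduction of Section \ref{sec: 観察}: the induced price-conditional markets $x^k$ are mutually distinct because the point-mass component $x^{\{v_k\}}$ (present since $\alpha>0$) makes $v_k$ the strictly unique optimal price of $x^k$. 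You instead build the direct segmentation explicitly, merging perfect discrimination with whole-market single-price segmentations on $x^V$ (legitimate because $x^*=x^V$ lies in every $X_m$), and verify the one-price-per-market requirement by hand via the $v_1$-coordinate computation $\alpha(1-x^V_1)+\beta x^V_1>0$. What each buys: your construction is more elementary and self-contained---it bypasses the Section \ref{sec: 観察} reduction entirely and uses Lemma \ref{lem: convex hull} only through convexity of $X_m$---and it confronts the distinctness issue (the very point of the paper's counterexamples) head-on rather than through strict optimality; but it exploits the identities $\lambda x^*+(1-\lambda)x^V=x^V$ and $x^V\in X_m$ for all $m$, so unlike the paper's Lemma \ref{lem: pi > pi star} it cannot be reused where general $x^*$ is needed, namely in the proof of Theorem \ref{thm: direct achievable condition}.
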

The proofs of Theorem \ref{thm: direct achievable condition} and Proposition \ref{prop: direct achievable when $x^* = x^V$} are provided in the appendix.

\appendix
\section{Proofs}
\subsection{Proof of Proposition \ref{prop: direct achievable when $x^* = x^V$}}
To prove Proposition \ref{prop: direct achievable when $x^* = x^V$}, we establish Lemmas \ref{lem: pi > pi star} and \ref{lem: K points when x^*=x^V}.
\begin{lem}\label{lem: pi > pi star}
For any $x^*$, if $u \geq 0$, $\pi > \pi^*$, and $u + \pi \leq w^*$, then there exists a direct segmentation with consumer surplus $u$ and producer surplus $\pi$.
\end{lem}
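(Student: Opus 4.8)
The plan is to exploit the linearity of all three surplus measures in the joint distribution over valuations and prices, together with the convexity of each $X_k$ (Lemma \ref{lem: convex hull}), so that the surplus pairs reachable by direct segmentations behave essentially convexly, and then to hit the target $(u,\pi)$ as an explicit mixture of three reference segmentations. Note first that $w^* > \pi^*$ always holds here, since $K \ge 2$ and every $x_k^* > 0$.

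First I would record a mixing operation. Given finitely many segmentations with optimal pricing, form the convex combination of their joint distributions over (valuation, price) with weights summing to one. At most $K$ prices appear; for each price $v_k$ that is used, let $z^k$ be the market obtained by conditioning the mixture on price $v_k$. Since $z^k$ is an average of markets each lying in $X_k$, Lemma \ref{lem: convex hull} gives $z^k \in X_k$, so charging $v_k$ in $z^k$ is optimal. Assigning $z^k$ the corresponding price-$v_k$ probability yields a segmentation supported on at most $K$ markets $\{z^k\}$ together with the direct pricing rule, and by construction it reproduces the mixed joint distribution exactly. Because consumer, producer and total surplus are linear functionals of that joint distribution, this induced segmentation achieves the convex combination of the original surplus pairs. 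Hence, \emph{provided the markets $z^k$ are pairwise distinct}, the induced object is a genuine direct segmentation and mixing acts linearly on surplus.

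Next I would choose three reference segmentations at the vertices of $T$. By Theorem \ref{thm: Surplus Triangle in bergemann2015limits} there exist segmentations with optimal pricing achieving the two lower vertices $(0,\pi^*)$ and $(w^*-\pi^*,\pi^*)$; the apex $(0,w^*)$ is achieved by the perfect-discrimination segmentation that places weight $x_k^*$ on the point-mass market $x^{\{v_k\}} \in X_k$ and charges $v_k$ there. For a target with $u \ge 0$, $\pi > \pi^*$ and $u+\pi \le w^*$, I would mix these three with weights $a,b,c$ given by $c = (\pi-\pi^*)/(w^*-\pi^*)$, $b = u/(w^*-\pi^*)$ and $a = (w^*-u-\pi)/(w^*-\pi^*)$. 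Then $\pi > \pi^*$ gives $c > 0$, $u \ge 0$ gives $b \ge 0$, and $u+\pi \le w^*$ gives $a \ge 0$, with $a+b+c = 1$; a direct computation shows this mixture has consumer surplus exactly $u$ and producer surplus exactly $\pi$. By the previous paragraph, the induced direct segmentation achieves $(u,\pi)$, so long as its markets are distinct.

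The main obstacle is precisely this distinctness requirement, which is the phenomenon behind Examples 1 and 2: under the original definition a single market cannot be split across two prices, so the construction fails if two conditional markets $z^k$ coincide. This coincidence is exactly what is forced on the lower edge $\pi = \pi^*$, and is responsible for the $x^* = x^V$ exclusion in Theorem \ref{thm: direct achievable condition}, so the hypothesis $\pi > \pi^*$ must be doing the work. The leverage is that $c > 0$ gives the perfect-discrimination component strictly positive weight, so each $z^k$ inherits a positive atom of $x^{\{v_k\}}$ and is pulled toward a distinct vertex of the simplex. I would use this to separate the markets, verifying that these atoms prevent $z^k = z^{k'}$; should a particular choice of the two Theorem \ref{thm: Surplus Triangle in bergemann2015limits} segmentations still produce a coincidence, I would reselect them (or slightly redistribute mass while staying in the open region $\pi > \pi^*$) to break it. Showing that distinctness can always be arranged whenever $\pi > \pi^*$ is the crux; the linear machinery above then delivers the surplus pair exactly.
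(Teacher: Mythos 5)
Your construction is, in outline, exactly the paper's: mix the perfect-discrimination segmentation (your weight $c=(\pi-\pi^*)/(w^*-\pi^*)>0$ plays the role of the paper's $\alpha$) with segmentations achieving the bottom edge of $T$, pass to the price-conditional markets $z^k$, and reduce everything to showing that the $z^k$ are pairwise distinct. Up to that reduction your argument is fine (the weights check out, and $w^*>\pi^*$ does hold). But the distinctness step is not a deferrable technicality --- it is the entire content of the lemma, given the paper's counterexamples --- and your treatment of it is a genuine gap. The heuristic that each $z^k$ ``inherits a positive atom of $x^{\{v_k\}}$ and is pulled toward a distinct vertex of the simplex'' is not a proof, and as a purely geometric principle it is false: a single point can carry positive weight on several vertices at once (for $K=2$, the point $(\tfrac12,\tfrac12)=\tfrac12 x^{\{v_1\}}+\tfrac12 x^{\{v_2\}}$ has a positive atom of $x^{\{v_1\}}$ and of $x^{\{v_2\}}$ in the same decomposition), so ``has an atom of $x^{\{v_k\}}$'' cannot by itself separate $z^k$ from $z^{k'}$. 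What the atom must be combined with is the fact that \emph{all} the other mass of $z^k$ lies in $X_k$ --- a fact your geometric picture never invokes. Your fallback (``reselect the segmentations if a coincidence occurs'') confirms the gap: you give no argument that some selection works, nor that reselection terminates, and in fact no reselection is ever needed.

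The missing argument, which is precisely how the paper closes this step, is a strict revenue inequality. Every market charged price $v_k$ with positive probability in your mixture lies in $X_k$, i.e.\ satisfies $v_k\sum_{j\ge k}x_j \ge v_i\sum_{j\ge i}x_j$ for all $i$; the atom $x^{\{v_k\}}$ satisfies these inequalities \emph{strictly} for every $i\neq k$ (its revenue at $v_k$ is $v_k$, versus $v_i<v_k$ for $i<k$ and $0$ for $i>k$). Since $c>0$ gives $x^{\{v_k\}}$ strictly positive conditional weight, averaging weak inequalities with one strict one yields
\[
v_k \sum_{j=k}^K z^k_j \;>\; v_i \sum_{j=i}^K z^k_j \qquad \text{for all } i\neq k,
\]
so $v_k$ is the \emph{unique} optimal price at $z^k$. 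Pairwise distinctness is then immediate: $z^k=z^{k'}$ with $k\neq k'$ would give one market two distinct unique optimal prices. With this paragraph inserted your proof is complete, works for any choice of the bottom-edge segmentations, and coincides with the proof in the paper.
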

\begin{proof}
We consider two segmentations, both of which are also used in \cite{bergemann2015limits}.
The first is defined as follows.  
Take any $k$ such that $x^* \in X_k$, and let $\sigma$ be a segmentation satisfying $\operatorname{supp} \sigma \subset \{x^S\}_{S \in \mathcal{V}_k}$.
The second is the following direct segmentation $\sigma'$:
\[
\sigma'(x) = 
\begin{cases}
x_k^* & \text{if } x = x^{\{v_k\}}, \\
0 & \text{otherwise}.
\end{cases}
\]
Note that for all $k$, $\sigma'(x^{\{v_k\}})>0$ since we assume that $x^*_k>0$.
Moreover, for a given market $x$, we define the minimum pricing rule $\underline{\phi}(x)$ to deterministically charge $\min \operatorname{supp} x$, and similarly, the maximum pricing rule $\overline{\phi}(x)$ to deterministically charge $\max \operatorname{supp} x$.
Now, both $\underline{\phi}$ and $\overline{\phi}$ are optimal pricing rule for both $\sigma$ and $\sigma'$.\par

Note that any pair $(u, \pi)$ satisfying $u \geq 0$, $\pi > \pi^*$, and $u + \pi \leq w^*$ can be expressed using $\alpha \in (0,1]$ and $\beta \in [0,1]$ as follows:
\[
(u,\pi) = \alpha \cdot (0, w^*) + (1 - \alpha) \cdot \left[ \beta \cdot (w^* - \pi^*, \pi^*) + (1 - \beta) \cdot (0, \pi^*) \right].
\]
According to \cite{bergemann2015limits}, the desired welfare outcome can be achieved using the pair $(\sigma'', \phi)$ defined by:
\[
\sigma''(x) = \alpha \sigma'(x) + (1 - \alpha) \sigma(x),
\]
\[
\phi_k(x) = \beta \, \underline{\phi}_k(x) + (1 - \beta) \, \overline{\phi}_k(x).
\]
Note that $\phi$ is optimal pricing rule for $\sigma''$.\par

Therefore, based on the observation in Section \ref{sec: 観察}, it suffices to show that the constructed $\{x^k\}$, defined as
\[
x^k \triangleq \frac{1}{\sum_{x \in \operatorname{supp} \sigma''} \sigma''(x) \phi_k(x)} \sum_{x \in \operatorname{supp} \sigma''} \sigma''(x) \phi_k(x) \cdot x,
\]
consists of mutually distinct elements.  
In particular, it is sufficient to show that the only optimal price at $x^k$ is $v_k$, which we proceed to prove below.
Since $\phi$ is an optimal pricing rule for $\sigma''$, for any $x$ such that $\sigma''(x) > 0$, if $\phi_k(x) > 0$, then
\[
v_k \sum_{j = k}^K x_j \geq v_i \sum_{j = i}^K x_j \quad \text{for all } i = 1, \ldots, K.
\]
In particular, since $\alpha > 0$, we have $\sigma''(x^{\{v_k\}}) > 0$ and
\[
v_k \sum_{j = k}^K x^{\{v_k\}}_j > v_i \sum_{j = i}^K x^{\{v_k\}}_j \quad \text{for all } i \neq k.
\]
Therefore, for all $i \neq k$, we obtain
\[
\sum_{x \in \operatorname{supp} \sigma''} \sigma''(x) \phi_k(x) \left(v_k \sum_{j = k}^K x_j\right) > \sum_{x \in \operatorname{supp} \sigma''} \sigma''(x) \phi_k(x) \left(v_i \sum_{j = i}^K x_j\right),
\]
or equivalently,
\[
v_k \sum_{j = k}^K x^k_j > v_i \sum_{j = i}^K x^k_j.
\]
\end{proof}

\begin{lem}\label{lem: K points when x^*=x^V}
Let $x^* = x^V$.  
If a direct segmentation achieves producer surplus $\pi = \pi^*$, then the corresponding consumer surplus $u$ satisfies
\[
u = \sum_{j = k}^K (v_j - v_k) x^*_j \quad \text{for some } k.
\]
\end{lem}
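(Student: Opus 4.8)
The plan is to exploit the special feature of $x^V$: since $x^* = x^V$, every price is optimal in the aggregate market, so that $v_i\sum_{j=i}^K x_j^* = \pi^*$ for all $i$. Writing $r_i(x) \triangleq v_i \sum_{j=i}^K x_j$ for the revenue obtained by charging $v_i$ in market $x$ --- a linear function of $x$ --- the idea is to show that the constraint $\pi = \pi^*$ is so rigid that it forces every segment in the support of a direct segmentation to coincide with $x^V$ itself. Once that is established, the conclusion will follow almost immediately.

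First I would record two identities for a direct segmentation $\sigma'$ supported on $\{x^k\}_{k\in A}$ with $x^k \in X_k$. By the feasibility condition $\sum_{k} \sigma'(x^k)\, x^k = x^*$ and the linearity of each $r_i$, for every fixed index $i$ one has $\sum_{k}\sigma'(x^k)\, r_i(x^k) = r_i(x^*) = \pi^*$. On the other hand, since $v_k$ is optimal in $x^k$, the producer surplus equals $\pi = \sum_{k}\sigma'(x^k)\, r_k(x^k)$ with $r_k(x^k) = \max_i r_i(x^k)$.

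Subtracting these two expressions gives, for every fixed $i$, the identity $\pi - \pi^* = \sum_{k}\sigma'(x^k)\big(r_k(x^k) - r_i(x^k)\big)$, in which each summand is nonnegative because $r_k(x^k)$ is the maximal revenue at $x^k$. Imposing $\pi = \pi^*$ then forces every summand to vanish: for each $k$ in the support and each $i$ we must have $r_i(x^k) = r_k(x^k)$, i.e.\ all prices are optimal in $x^k$. Since $x^V$ is the unique market at which every price ties (it is the point $x^S$ for $S = V$, uniquely determined), this yields $x^k = x^V$ for every segment.

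Finally I would invoke the original definition of direct segmentation to close the argument: a direct segmentation is a distribution over markets in which each market carries a single deterministic price, so once all segments collapse to the single market $x^V = x^*$, the segmentation must place all its mass there and charge one price $v_k$ (any $k$ works, since $x^V \in X_k$ for all $k$). The resulting consumer surplus is exactly $\sum_{j=k}^K (v_j - v_k) x_j^*$, as claimed. I expect the main obstacle to be the step where nonnegativity of the summands is leveraged to conclude that every segment equals $x^V$, together with the care needed in arguing that, under the original definition, coinciding segments cannot carry distinct prices and hence the support genuinely reduces to a single priced market.
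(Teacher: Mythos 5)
Your proof is correct and follows essentially the same route as the paper's: both arguments show that $\pi = \pi^*$ forces every price's revenue to tie with the optimal revenue in every supported segment (the paper via per-segment revenue $\geq v_1 = \pi^*$ and averaging, you via the identity $\pi - \pi^* = \sum_k \sigma'(x^k)\left(r_k(x^k) - r_i(x^k)\right)$ with nonnegative summands), so that each segment must equal the unique all-prices-tying market $x^V$, after which the support collapses to a single deterministically priced market. The only difference is organizational --- the paper splits on $|\operatorname{supp} \sigma|$ and derives a contradiction when the support contains two or more markets, whereas you conclude directly --- but the mathematical content is the same.
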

\begin{proof}
Take any direct segmentation $\sigma$ (and associated direct pricing rule $\phi$) with producer surplus $\pi = \pi^*$.  
If $|\operatorname{supp} \sigma| = 1$, then we must have $\sigma(x^V) = 1$, and the corresponding consumer surplus $u$ satisfies
\[
u = \sum_{j = k}^K (v_j - v_k) x^*_j \quad \text{for some } k.
\]
Thus, we may assume $|\operatorname{supp} \sigma| \geq 2$.  
Since price $v_1$ is optimal for $x^*$, we have $\pi^* = v_1$.  
Hence, $\pi = \pi^*$ implies
\[
\sum_{x \in \operatorname{supp} \sigma} \sigma(x) 
\sum_{k = 1}^K \phi_k(x) v_k 
\sum_{j = k}^K x_j = v_1.
\]
Now, by the optimality of $\phi$, we have for all $x \in \operatorname{supp} \sigma$,
\[
\sum_{k = 1}^K \phi_k(x) v_k \sum_{j = k}^K x_j \geq v_1.
\]
Therefore, for all $x \in \operatorname{supp} \sigma$,
\[
\sum_{k = 1}^K \phi_k(x) v_k \sum_{j = k}^K x_j = v_1.
\]
It follows that for any $k$ and any $x \in \operatorname{supp} \sigma$,
\[
v_1 \geq v_k \sum_{j = k}^K x_j.
\]
Now, for all $k$,
\[
v_1 = v_k \sum_{j = k}^K x^*_j = \sum_{x \in \operatorname{supp} \sigma} \sigma(x) \left( v_k \sum_{j = k}^K x_j \right).
\]
Therefore, for any $k$ and any $x \in \operatorname{supp} \sigma$,
\[
v_1 = v_k \sum_{j = k}^K x_j.
\]
This implies that every $x \in \operatorname{supp} \sigma$ must satisfy $x = x^V$, which contradicts the assumption that $|\operatorname{supp} \sigma| \geq 2$.

\end{proof}

From Theorem \ref{thm: Surplus Triangle in bergemann2015limits} and Lemmas \ref{lem: pi > pi star} and \ref{lem: K points when x^*=x^V}, we can derive Proposition \ref{prop: direct achievable when $x^* = x^V$}.

\subsection{Proof of Theorem \ref{thm: direct achievable condition}}
The "only if" part follows from Lemma \ref{lem: K points when x^*=x^V}.
Therefore, it suffices to prove the "if" part.  
To this end, we prepare several lemmas.

\begin{lem}\label{lem: x lies conv}
Let $x \in X$, and let $P = \{v_k \in V : \text{price } v_k \text{ is optimal for } x \}$.  
Then,
\[
x \in \operatorname{conv} \left( \{x^S\}_{ S : P \subset S \subset V } \right).
\]
\end{lem}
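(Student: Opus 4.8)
The plan is to leverage Lemma~\ref{lem: convex hull} together with a sharp description of the optimal prices of each characteristic market. For $y \in X$ and a price $v_i$, write $R_i(y) \triangleq v_i \sum_{j=i}^K y_j$ for the revenue from charging $v_i$, and note that $R_i$ is linear in $y$. The first step is to establish the auxiliary fact that the set of optimal prices of $x^S$ is \emph{exactly} $S$. The second step fixes an arbitrary $v_k \in P$, writes $x$ as a convex combination of characteristic markets via Lemma~\ref{lem: convex hull}, and then shows that every market appearing with positive weight in this combination must have index set containing all of $P$.

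For the auxiliary fact, observe that by construction all $v_i \in S$ yield the common revenue $r_S \triangleq R_i(x^S)$, and since $x^S$ is supported on $S$ we have $r_S = R_{\min S}(x^S) = \min S > 0$. It then remains to check $R_\ell(x^S) < r_S$ for every $v_\ell \notin S$, which I would do by a short case analysis on the position of $v_\ell$ relative to $S$: if $v_\ell < \min S$ then $R_\ell(x^S) = v_\ell < \min S = r_S$; if $v_\ell > \max S$ then $R_\ell(x^S) = 0 < r_S$; and if $v_a < v_\ell < v_b$ for consecutive elements $v_a, v_b \in S$, then $R_\ell(x^S) = v_\ell \sum_{v_j \in S,\, v_j \ge v_b} x^S_j < v_b \sum_{v_j \in S,\, v_j \ge v_b} x^S_j = R_b(x^S) = r_S$, the relevant partial sum being positive because $x^S_b > 0$. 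Hence $S$ is precisely the optimal set of $x^S$, and in particular $r_S$ is the maximal revenue attainable in $x^S$.

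For the main step, fix $v_k \in P$; then $x \in X_k$, so Lemma~\ref{lem: convex hull} gives $x = \sum_S \lambda_S x^S$ with $\lambda_S > 0$, $\sum_S \lambda_S = 1$, and $v_k \in S$ for every term. Let $r$ denote the optimal revenue of $x$, so that $R_m(x) = r$ for all $v_m \in P$. By linearity and $v_k \in S$, we have $r = R_k(x) = \sum_S \lambda_S R_k(x^S) = \sum_S \lambda_S r_S$. Now take any $v_\ell \in P$. Using linearity again together with $R_\ell(x^S) \le r_S$ (valid because $r_S$ is the maximal revenue of $x^S$), we obtain
\[
r = R_\ell(x) = \sum_S \lambda_S R_\ell(x^S) \le \sum_S \lambda_S r_S = r,
\]
so equality must hold termwise: $R_\ell(x^S) = r_S$ for every $S$ with $\lambda_S > 0$. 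By the auxiliary fact this forces $v_\ell \in S$. Since $v_\ell \in P$ was arbitrary, $P \subset S$ for every term, and therefore $x \in \operatorname{conv}\left(\{x^S\}_{S : P \subset S \subset V}\right)$.

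The hard part will be the auxiliary fact that the optimal set of $x^S$ equals $S$: the defining conditions only declare the prices in $S$ to give equal revenue, so one must verify that no price outside $S$ does at least as well, which is exactly where the case analysis (and the positivity $r_S = \min S > 0$) is needed. The only other subtlety is the ``equality in a convex combination forces termwise equality'' argument, which depends on the uniform bound $R_\ell(x^S) \le r_S$ across the support, and this is precisely the content of the auxiliary fact. I would also note in passing that the well-definedness and nonnegativity of the $x^S$ entering the decomposition are guaranteed by the setup underlying Lemma~\ref{lem: convex hull}.
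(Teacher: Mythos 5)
Your proof is correct and takes essentially the same route as the paper's: decompose $x$ via Lemma~\ref{lem: convex hull} with respect to some $v_k \in P$, then use the fact that any price outside $S$ earns strictly less than the common revenue $r_S$ on $x^S$ to conclude that every positively weighted term must satisfy $P \subset S$. The only differences are presentational --- the paper argues by contradiction and leaves the auxiliary fact (that the optimal price set of $x^S$ is exactly $S$) implicit, whereas you prove it explicitly and then obtain $P \subset S$ from termwise equality in the convex combination.
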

\begin{proof}
Take any $v_k \in P$.  
By Lemma \ref{lem: convex hull}, we can write
\[
x = \alpha_1 x^{S_1} + \alpha_2 x^{S_2} + \cdots + \alpha_m x^{S_m}
\]
for some $\alpha_i > 0$ and $S_i \in \mathcal{V}_k$, with $\alpha_1+\alpha_2+\cdots+\alpha_m=1$.  
We now prove that $P \subset S_i$ for all $S_i$.  
Suppose, to the contrary, that there exists some $S_{i^*}$ such that $P \setminus S_{i^*} \neq \emptyset$.  
Take any $v_\ell \in P \setminus S_{i^*}$.  
Now, for any $S_i$,
\[
v_k \sum_{j = k}^K x^{S_i}_j \geq v_\ell \sum_{j = l}^K x^{S_i}_j.
\]
Moreover,
\[
v_k \sum_{j = k}^K x^{S_{i^*}}_j > v_\ell \sum_{j = l}^K x^{S_{i^*}}_j.
\]
Thus,
\[
v_k \sum_{j = k}^K x_j=\alpha_1 v_k \sum_{j = k}^K x^{S_1}_j +  \cdots + \alpha_m v_k \sum_{j = k}^K x^{S_m}_j  >\alpha_1 v_\ell \sum_{j = l}^K x^{S_1}_j +  \cdots + \alpha_m v_\ell \sum_{j = l}^K x^{S_m}_j   =v_\ell \sum_{j = l}^K x_j,
\]
which contradicts the assumption that $v_\ell \in P$.  
Therefore, we must have $P \subset S_i$ for all $S_i$, and hence
\[
x \in \operatorname{conv} \left( \{x^S\}_{S : P \subset S \subset V} \right).
\]
\end{proof}

\begin{lem}\label{lem: affine price equal}
Let $P \in \mathcal{V}\setminus\{\emptyset\}$.  
$x \in \operatorname{aff}\left(\{x^S\}_{S : P \subset S \subset V} \right)$
implies
\[
v_k \sum_{j = k}^K x_j = v_\ell \sum_{j = l}^K x_j \quad \text{for all } v_k, v_\ell \in P.
\]
\end{lem}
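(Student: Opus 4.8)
The plan is to exploit the fact that for each index $k$ the revenue expression $v_k \sum_{j=k}^K x_j$ is a \emph{linear} functional of $x$, and that a linear functional which vanishes on every generator of an affine hull must vanish on the whole hull. First I would fix arbitrary prices $v_k, v_\ell \in P$ and introduce the linear functional
\[
L(y) \triangleq v_k \sum_{j=k}^K y_j - v_\ell \sum_{j=\ell}^K y_j,
\]
which records the difference in revenue between charging $v_k$ and charging $v_\ell$ in market $y$. Since $v_k,v_\ell \in P$ are arbitrary, the desired conclusion is exactly the statement that $L(x) = 0$ for every such pair.

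Next I would check that $L$ vanishes on each generator $x^S$ indexing the affine hull. For any $S$ with $P \subset S \subset V$, both $v_k$ and $v_\ell$ belong to $S$, so the defining equations of $x^S$ (all prices in $S$ yield equal revenue at $x^S$) give $v_k \sum_{j=k}^K x_j^S = v_\ell \sum_{j=\ell}^K x_j^S$, i.e.\ $L(x^S)=0$. Then, writing $x \in \operatorname{aff}(\{x^S\}_{S:\,P\subset S\subset V})$ as $x = \sum_i \lambda_i x^{S_i}$ with $\sum_i \lambda_i = 1$, linearity of $L$ yields
\[
L(x) = \sum_i \lambda_i\, L(x^{S_i}) = 0,
\]
which is precisely $v_k \sum_{j=k}^K x_j = v_\ell \sum_{j=\ell}^K x_j$, completing the argument.

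There is no substantive obstacle in this lemma; the one point deserving care is that the passage from the generators to the whole affine hull relies on \emph{homogeneity} of $L$, not merely on its being affine. Because the affine coefficients $\lambda_i$ need not be nonnegative, the identity $L(\sum_i \lambda_i x^{S_i}) = \sum_i \lambda_i L(x^{S_i})$ must be justified by the linearity of $L$ as a functional on $\mathbb{R}^V$, and this is exactly what lets a property verified pointwise on the finitely many $x^S$ propagate to every point of their affine span.
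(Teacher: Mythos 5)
Your proof is correct and follows essentially the same route as the paper's: both arguments expand $x$ as an affine combination of the generators $x^S$ and use the linearity (homogeneity) of the revenue expressions together with the defining equations of $x^S$ to transfer the equal-revenue property from the generators to $x$. Packaging the argument in the difference functional $L$ is only a cosmetic reformulation of the paper's direct chain of equalities.
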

\begin{proof}
Take any $x \in \operatorname{aff}\left(\{x^S\}_{S : P \subset S \subset V} \right)$ and any $v_k, v_\ell \in P$.  
By the definition of the affine hull, we can write
\[
x = \alpha_1 x^{S_1} + \alpha_2 x^{S_2} + \cdots + \alpha_m x^{S_m}
\]
for some $\alpha_i \in \mathbb{R}$ and $S_i \supset P$.
Since by definition, for any $S_i$,
\[
v_k \sum_{j = k}^K x^{S_i}_j = v_\ell \sum_{j = l}^K x^{S_i}_j,
\]
we have
\[
v_k \sum_{j = k}^K x_j = \alpha_1 v_k \sum_{j = k}^K x^{S_1}_j + \cdots + \alpha_m v_k \sum_{j = k}^K x^{S_m}_j = \alpha_1 v_\ell \sum_{j = l}^K x^{S_1}_j + \cdots + \alpha_m v_\ell \sum_{j = l}^K x^{S_m}_j = v_\ell \sum_{j = l}^K x_j.
\]
\end{proof}

\begin{lem}\label{lem: relative interior}
Let $P = \{v_k \in V : \text{price } v_k \text{ is optimal for } x^* \}$.  
Then, $x^*$ lies in the relative interior of $\operatorname{conv} \left( \{x^S\}_{ S : P \subset S \subset V } \right)$.
\end{lem}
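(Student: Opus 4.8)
The plan is to realize the polytope $C := \operatorname{conv}\!\left(\{x^S\}_{S : P \subset S \subset V}\right)$ as an explicit polyhedron, check that $x^*$ satisfies all of its defining \emph{inequalities} strictly, and then invoke the prolongation characterization of the relative interior: $x \in \operatorname{relint}(C)$ if and only if for every $y \in C$ there is $\epsilon > 0$ with $x + \epsilon(x - y) \in C$. Writing $r_k(x) := v_k \sum_{j=k}^K x_j$ for the revenue from price $v_k$ in market $x$, the target description is
\[
C' := \Big\{ x \in \mathbb{R}^V : x_j \ge 0\ \forall j,\ \textstyle\sum_{j} x_j = 1,\ r_k(x) = r_\ell(x)\ \forall v_k, v_\ell \in P,\ r_k(x) \ge r_i(x)\ \forall v_k \in P,\ v_i \notin P \Big\},
\]
i.e.\ the set of markets for which every price in $P$ is (weakly) optimal.

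First I would prove $C \subseteq C'$. Each generator $x^S$ with $P \subset S$ lies in $C'$: by Lemma \ref{lem: convex hull} it belongs to $X_k$ for every $v_k \in S \supseteq P$, so each such $v_k$ is optimal for $x^S$ (giving $r_k(x^S) \ge r_i(x^S)$ for all $i$), and by the defining conditions of $x^S$ all prices in $S$, in particular those in $P$, yield equal revenue. Since $C'$ is convex and contains every generator of $C$, we get $C \subseteq C'$.

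The reverse inclusion $C' \subseteq C$ is where Lemma \ref{lem: x lies conv} does the work, and I expect it to be the main obstacle. Given $x \in C'$, every price in $P$ is optimal for $x$, so the set of optimal prices $P(x)$ of $x$ satisfies $P \subseteq P(x)$. Applying Lemma \ref{lem: x lies conv} to $x$ yields $x \in \operatorname{conv}\!\left(\{x^S\}_{P(x) \subset S \subset V}\right)$; since enlarging the required subset from $P$ to $P(x)$ only shrinks the index set, this hull is contained in $C$, so $x \in C$. Hence $C = C'$, and in particular the prolongation of any point below stays inside $C$.

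Finally I would verify that $x^*$ satisfies every defining relation of $C'$ with all the \emph{inequalities} strict: $x_j^* > 0$ for all $j$ by the standing assumption $x_k^* > 0$, and $r_i(x^*) < r_k(x^*)$ for $v_i \notin P$, $v_k \in P$ because $P$ is \emph{exactly} the optimal-price set of $x^*$, so prices outside $P$ are strictly suboptimal; the equalities $r_k(x^*) = r_\ell(x^*)$ for $v_k, v_\ell \in P$ hold since all of $P$ is optimal. Then for any $y \in C = C'$ and small $\epsilon > 0$, the prolonged point $z = x^* + \epsilon(x^* - y)$ remains in $C'$: the affine relations $\sum_j z_j = 1$ and $r_k(z) = r_\ell(z)$ are preserved because both $x^*$ and $y$ satisfy them and these expressions are affine in $x$, while the inequalities $z_j \ge 0$ and $r_k(z) \ge r_i(z)$ persist for $\epsilon$ small, holding strictly at $x^*$ and weakly at $y$. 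Thus $x^* + \epsilon(x^* - y) \in C$ for every $y \in C$, which is precisely the statement $x^* \in \operatorname{relint}(C)$.
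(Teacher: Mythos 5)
Your proof is correct, but it takes a genuinely different route from the paper's. You identify $\operatorname{conv}\left(\{x^S\}_{S : P \subset S \subset V}\right)$ with the explicit polyhedron $C'$ of markets at which every price in $P$ is optimal (equivalently, $\bigcap_{v_k \in P} X_k$) --- using Lemma \ref{lem: convex hull} and the definition of $x^S$ for one inclusion and Lemma \ref{lem: x lies conv} for the other --- and then conclude via the prolongation characterization of the relative interior, since $x^*$ satisfies all inequality constraints of $C'$ strictly ($x_j^* > 0$ and strict suboptimality of prices outside $P$). The paper instead works with the topological definition of relative interior: it takes a small ball around $x^*$ intersected with the affine hull, assumes for contradiction a sequence $x^n \to x^*$ in that affine slice but outside the hull, uses Lemma \ref{lem: affine price equal} together with Lemma \ref{lem: x lies conv} to extract for each $x^n$ an optimal price outside $P$, and passes to a constant subsequence (finiteness of $V \setminus P$) to conclude by continuity that some price outside $P$ is optimal for $x^*$, a contradiction. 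Both arguments hinge on Lemma \ref{lem: x lies conv}, but your route replaces the paper's limit argument and its reliance on Lemma \ref{lem: affine price equal} with a structural identity ($C = C'$) plus a standard convex-analysis criterion; this buys a cleaner, sequence-free proof and a reusable description of the hull as an intersection of the sets $X_k$, at the cost of invoking the prolongation theorem (e.g.\ Rockafellar, Theorem 6.4), whereas the paper's argument is self-contained up to elementary topology.
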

\begin{proof}
Note that $x^*\in \operatorname{conv} \left( \{x^S\}_{ S : P \subset S \subset V } \right)$ by Lemma \ref{lem: x lies conv}, and that
\[
\operatorname{aff} \left( \operatorname{conv} \left( \{x^S\}_{S : P \subset S \subset V} \right) \right) = \operatorname{aff} \left( \{x^S\}_{S : P \subset S \subset V} \right).
\]
Also, since $x^*_k > 0$ for all $k$, there exists some $\varepsilon > 0$ such that for any positive $\delta < \varepsilon$, any $x \in B_\delta(x^*) \cap \operatorname{aff} \left( \{x^S\}_{S : P \subset S \subset V} \right)$ satisfies $x \in \Delta(V)$.
Here, $B_\delta(x^*)$ denotes the open ball $\{ y \in \mathbb{R}^K : \| y - x^* \| < \delta\}$.
We prove that there exists some positive $\delta < \varepsilon$ such that for any $x \in B_{\delta}(x^*) \cap \operatorname{aff} \left( \{x^S\}_{S : P \subset S \subset V} \right)$,
\[
x \in \operatorname{conv} \left( \{x^S\}_{S : P \subset S \subset V} \right).
\]
To derive a contradiction, assume that for every $n\in \mathbb{N}$ satisfying $1/n < \varepsilon$, there exist $x^n \in B_{1/n}(x^*) \cap \operatorname{aff} \left( \{x^S\}_{S : P \subset S \subset V} \right)$ such that $x^n \notin \operatorname{conv} \left( \{x^S\}_{S : P \subset S \subset V} \right)$.
By Lemma \ref{lem: x lies conv} and Lemma \ref{lem: affine price equal}, for each $x^n$,
there exists $v_{k_n} \in V \setminus P$ such that price $v_{k_n}$ is optimal for $x^n$.\par

Since $V \setminus P$ is a finite set, there exists a subsequence $\{v_{k_{a_n}}\}_{n = 1}^{\infty}$ of the sequence $\{v_{k_n}\}_{n = \min \{n'\in \mathbb{N} : 1/\varepsilon < n'\}}^{\infty}$ such that $v_{k_{a_n}} = v_k$ for some fixed $v_k \in V \setminus P$.  
This means that for every $n \in \mathbb{N}$, price $v_k$ is optimal for $x^{a_n}$, that is,
\[
v_k \sum_{j = k}^K x^{a_n}_j \geq v_\ell \sum_{j = l}^K x^{a_n}_j \quad \text{for all } v_\ell \in V, \quad n \in \mathbb{N}.
\]
By construction, we have $\lim_{n \to \infty} x^{a_n} = x^*$, which implies
\[
v_k \sum_{j = k}^K x^{*}_j \geq v_\ell \sum_{j = l}^K x^{*}_j \quad \text{for all } v_\ell \in V.
\]
Thus, price $v_k \notin P$ is optimal for $x^*$, yielding a contradiction.
\end{proof}

\begin{lem}\label{lem: opt price if and if}
Suppose that $x \in X$ can be expressed as
\[
x = \alpha_1 x^{S_1} + \alpha_2 x^{S_2} + \cdots + \alpha_m x^{S_m}
\]
for some $\alpha_i > 0$ and $S_i \in \mathcal{V}_k$.  
Then, price $v_\ell$ is optimal for $x$ if and only if $v_\ell \in S_i$ for all $S_i$.
\end{lem}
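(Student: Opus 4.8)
The plan is to phrase everything in terms of the revenue function $r_\ell(x) := v_\ell \sum_{j=\ell}^K x_j$, the revenue obtained by charging $v_\ell$ in market $x$; the point is that $r_\ell$ is \emph{linear} in $x$, so $r_\ell(x) = \sum_{i=1}^m \alpha_i\, r_\ell(x^{S_i})$. Here $\sum_i \alpha_i = 1$ comes for free: summing the coordinates in $x = \sum_i \alpha_i x^{S_i}$ and using $x, x^{S_i} \in \Delta(V)$ forces it. The key reduction is to first establish the \emph{single-market} version of the claim — for one characteristic market $x^S$, price $v_\ell$ is optimal if and only if $v_\ell \in S$ — and then lift it to the convex combination via a weighted-average argument.

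First I would pin down $x^S$ explicitly. Writing $S = \{v_{i_1} < \cdots < v_{i_p}\}$, the defining equations, which equalize the revenues $r_{i_1}(x^S) = \cdots = r_{i_p}(x^S)$, together with $\sum_{v_j \in S} x^S_j = 1$, solve to give common revenue $\min S = v_{i_1}$ and strictly positive mass on each point of $S$, so the support of $x^S$ is exactly $S$. For $v_\ell \in S$ the revenue equals this common maximal value. For $v_\ell \notin S$, let $v_m$ be the smallest element of $S$ exceeding $v_\ell$ (and note $r_\ell(x^S) = 0$ if no element of $S$ exceeds $v_\ell$); since the elements of $S$ that are $\ge v_\ell$ are exactly those $\ge v_m$, one gets $r_\ell(x^S) = v_\ell \cdot (\min S)/v_m$, and $v_\ell < v_m$ makes this \emph{strictly} below $\min S$. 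I expect this strict-inequality step to be the main obstacle: the whole lemma rests on prices outside $S$ being strictly — not merely weakly — suboptimal for $x^S$, and this is precisely where the exact coordinates of $x^S$ are needed.

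With the single-market fact in hand, the convex combination is straightforward. Since every $S_i \in \mathcal{V}_k$ contains $v_k$, price $v_k$ is optimal for each $x^{S_i}$, so $r_k(x^{S_i}) = \max_\ell r_\ell(x^{S_i}) =: M_i$, and the single-market fact gives $r_\ell(x^{S_i}) \le M_i$ with equality precisely when $v_\ell \in S_i$. Averaging with the weights $\alpha_i$ yields $r_\ell(x) \le \sum_i \alpha_i M_i = r_k(x)$ for every $\ell$, hence $\max_\ell r_\ell(x) = r_k(x)$, so $v_\ell$ is optimal for $x$ exactly when $r_\ell(x) = r_k(x)$. Because all $\alpha_i > 0$, this equality holds if and only if $r_\ell(x^{S_i}) = M_i$ for every $i$, i.e.\ if and only if $v_\ell \in S_i$ for all $i$ — which delivers both directions of the claim at once.
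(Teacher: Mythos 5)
Your proof is correct, and its skeleton — expressing the revenue at each price as a positive-weighted average of revenues in the characteristic markets, then exploiting that equality in the averaged inequality forces equality term by term (and a single strict term forces strict inequality) — is exactly the paper's argument. The genuine difference is in what is proved versus assumed. The paper's proof takes for granted two facts about the characteristic markets: that every price in $S_i$ is optimal for $x^{S_i}$ (used for the weak inequality in the ``if'' direction), and that every price outside $S_{i^*}$ is \emph{strictly} suboptimal for $x^{S_{i^*}}$ (the ``moreover'' step in the ``only if'' direction). These are properties of $x^S$ imported implicitly from \cite{bergemann2015limits} and never verified in this paper. You instead derive them from the defining equations of $x^S$: the common revenue of prices in $S$ equals $\min S$, the support of $x^S$ is exactly $S$ with strictly positive mass on each element, and a price $v_\ell \notin S$ earns $v_\ell \cdot (\min S)/v_m < \min S$ (or $0$ if $v_\ell > \max S$). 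This makes your proof self-contained where the paper's is not, and you correctly identify the strict inequality as the crux; the cost is a page of explicit computation that the paper avoids by citation. Your observation that $\sum_i \alpha_i = 1$ is automatic (by summing coordinates) is also a nice touch, though neither proof actually needs the normalization — positivity of the weights suffices.
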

\begin{proof}
\textbf{(if part)}  
Suppose that $v_\ell \in S_i$ for all $S_i$.  
Then, for any $k'$, we have
\[
v_{k'} \sum_{j = k'}^K x_j = \alpha_1 v_{k'} \sum_{j = k'}^K x^{S_1}_j + \cdots + \alpha_m v_{k'} \sum_{j = k'}^K x^{S_m}_j \leq \alpha_1 v_\ell \sum_{j = l}^K x^{S_1}_j + \cdots + \alpha_m v_\ell \sum_{j = l}^K x^{S_m}_j = v_\ell \sum_{j = l}^K x_j.
\]
Therefore, price $v_\ell$ is optimal for $x$.\par

\textbf{(only if part)}  
Suppose that price $v_\ell$ is optimal for $x$.  
Assume, for contradiction, that there exists some $S_{i^*}$ such that $v_\ell \notin S_{i^*}$.  
Then, for all $S_i$,
\[
v_{k} \sum_{j = k}^K x^{S_i}_j \geq v_\ell \sum_{j = l}^K x^{S_i}_j,
\]
and moreover,
\[
v_{k} \sum_{j = k}^K x^{S_{i^*}}_j > v_\ell \sum_{j = l}^K x^{S_{i^*}}_j.
\]
Thus,
\[
v_k \sum_{j = k}^K x_j = \alpha_1 v_k \sum_{j = k}^K x^{S_1}_j + \cdots + \alpha_m v_k \sum_{j = k}^K x^{S_m}_j > \alpha_1 v_\ell \sum_{j = l}^K x^{S_1}_j + \cdots + \alpha_m v_\ell \sum_{j = l}^K x^{S_m}_j = v_\ell \sum_{j = l}^K x_j,
\]
which contradicts the optimality of price $v_\ell$ for $x$.
\end{proof}

\begin{lem}\label{lem: change u, no change パイ}
Let $x^*\neq x^V$.
There exists a direct segmentation with consumer surplus $u$ 
and producer surplus $\pi^*$ if and only if $0\leq u \leq w^*-\pi^*$.
\end{lem}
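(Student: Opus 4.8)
The statement is an equivalence whose forward direction is immediate: for any direct segmentation, consumer surplus is a sum of nonnegative terms, so $u \ge 0$, while total surplus never exceeds $w^*$, so $u = (u+\pi^*) - \pi^* \le w^* - \pi^*$. The substance is the converse, for which the plan is to show that every $u \in [0, w^*-\pi^*]$ is attained. The organizing principle would be the observation from Section \ref{sec: 観察}: to exhibit a direct segmentation with a prescribed $(u,\pi^*)$ it suffices to produce \emph{any} pair $(\sigma,\phi)$ of a segmentation and an optimal pricing rule realizing $(u,\pi^*)$ whose conditional-on-price markets $x^{(\ell)} \propto \sum_{x}\sigma(x)\phi_\ell(x)\,x$ are pairwise distinct; the induced direct segmentation then reproduces the whole joint distribution over valuations and prices, hence the surpluses.

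I would first record the mechanism that pins the producer surplus at $\pi^*$. Fix a uniformly optimal price $v_{k^*}$ for $x^*$. If every segment in the support has $v_{k^*}$ among its optimal prices and $\phi$ is optimal, then in each segment the charged price and $v_{k^*}$ are both revenue-maximizing and hence yield equal revenue; since the fixed-price revenue $x \mapsto v_{k^*}\sum_{j\ge k^*} x_j$ is linear and averages to $x^*$, the producer surplus equals $v_{k^*}\sum_{j\ge k^*} x^*_j = \pi^*$ automatically. Thus the whole game is to hold $v_{k^*}$ optimal in every segment while moving the consumer surplus.

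To sweep the consumer surplus I would use the extremal markets together with the minimum and maximum pricing rules $\underline{\phi},\overline{\phi}$ of Lemma \ref{lem: pi > pi star}. Let $P$ be the optimal-price set of $x^*$; since $x^* \ne x^V$ we have $P \subsetneq V$, so by Lemma \ref{lem: relative interior} $x^*$ lies in the relative interior of $\operatorname{conv}(\{x^S\}_{P\subseteq S\subseteq V})$ and can therefore be written as $\sum_{P\subseteq S\subseteq V}\mu_S\, x^S$ with every $\mu_S>0$. Each such $x^S$ has optimal-price set $S \supseteq P \ni v_{k^*}$, so the producer surplus is $\pi^*$ for any optimal pricing. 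Charging $\min S$ in every segment is efficient (all consumers buy) and gives $u = w^*-\pi^*$; charging $\max S$ extracts all surplus and gives $u=0$; the mixture $\phi = \beta\,\underline{\phi} + (1-\beta)\,\overline{\phi}$ gives, by linearity of consumer surplus in $\phi$, exactly $u = \beta(w^*-\pi^*)$, so letting $\beta$ range over $[0,1]$ covers the whole interval.

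The hard part is distinctness of the conditional markets. At the two corners it is free: under $\overline{\phi}$ the market conditional on price $v_\ell$ is supported on $\{v_1,\dots,v_\ell\}$ with positive mass on $v_\ell$, so its maximal valuation is $v_\ell$, and symmetrically under $\underline{\phi}$ its minimal valuation is $v_\ell$; in either case the conditional markets have distinct supports across $\ell$, so the corner segmentations are genuinely direct. For interior $\beta$ this separation by support breaks down, and this is where I expect the real work: I would identify the optimal-price set of each pooled conditional market through Lemma \ref{lem: opt price if and if} (it is the intersection of the sets $S$ contributing to that price) and exploit the positivity of \emph{all} $\mu_S$, guaranteed by the relative-interior decomposition, to separate the pooled markets. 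Finally I would close the loop by noting that $x^* \ne x^V$ is exactly the hypothesis that makes all of this possible: when $x^* = x^V$ one has $P = V$ and $\operatorname{conv}(\{x^S\}_{P\subseteq S\subseteq V})$ collapses to the single point $x^V$, so no segmentation keeping $v_{k^*}$ optimal in every segment can employ two distinct markets — precisely the obstruction that Lemma \ref{lem: K points when x^*=x^V} quantifies.
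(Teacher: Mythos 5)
Your setup is exactly the paper's: decompose $x^* = \sum_{S : P \subset S \subset V} \mu_S x^S$ with all $\mu_S > 0$ via Lemma \ref{lem: relative interior}, price with the mixture $\beta\,\underline{\phi} + (1-\beta)\,\overline{\phi}$ to pin producer surplus at $\pi^*$ and sweep $u$, handle the corners $\beta \in \{0,1\}$ by support considerations, and then check distinctness of the price-conditional markets. But the step you defer as ``the real work'' is not merely unfinished --- the route you indicate (optimal-price sets via Lemma \ref{lem: opt price if and if} plus positivity of the $\mu_S$) \emph{cannot} succeed in one case, and that case is the only place where the hypothesis $x^* \neq x^V$ is actually load-bearing. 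Suppose $|P| \geq 2$ with $\min P = v_1$ and $\max P = v_K$ (e.g.\ $K = 3$, $P = \{v_1, v_3\}$, which is compatible with $x^* \neq x^V$). Every $S$ with $P \subset S \subset V$ then satisfies $\min S = v_1$ and $\max S = v_K$, so under the mixture pricing \emph{every} segment charges $v_1$ with probability $\beta$ and $v_K$ with probability $1-\beta$. Consequently the conditional markets for prices $v_1$ and $v_K$ are both equal to $x^*$ itself: they are identical, and no refinement of the intersection argument or appeal to $\mu_S > 0$ can separate two markets that coincide. (For prices strictly between $\min P$ and $\max P$ nothing is ever charged, and for prices outside $[\min P, \max P]$ your intersection argument does work, as does the singleton-$P$ case; the clash is exactly between the two extremes of $P$.)

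The paper resolves this case by abandoning the mixture construction altogether. Since $x^* \neq x^V$ forces $P \neq V$, there exist two \emph{distinct} sets $S_1 \neq S_2$ with $P \subset S_i \subset V$; perturbing the weights $\mu_{S_1}, \mu_{S_2}$ in opposite directions yields two distinct markets $x' \neq x''$ with $\beta x' + (1-\beta)x'' = x^*$, each of which has both $v_1$ and $v_K$ optimal by Lemma \ref{lem: opt price if and if}. Charging $v_1$ on $x'$ and $v_K$ on $x''$ gives a genuine direct segmentation with producer surplus $\pi^*$ for every $\beta$, and its consumer surplus is continuous in $\beta$ with limits $0$ as $\beta \to 0^+$ and $w^* - \pi^*$ as $\beta \to 1^-$, so the intermediate value theorem delivers the target $u$. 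This two-segment construction --- and with it the actual use of $x^* \neq x^V$ --- is what your proposal is missing: your closing remark locates the hypothesis in the degeneration of the convex hull when $x^* = x^V$, but its real role is to manufacture two distinct segments on which the clashing prices $v_1$ and $v_K$ can be charged separately.
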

\begin{proof}
The "only if" part follows from Theorem \ref{thm: Surplus Triangle in bergemann2015limits}.  
Therefore, it suffices to prove the "if" part.
Let $P = \{v_k \in V : \text{price } v_k \text{ is optimal for } x^* \}$.  
By Lemma \ref{lem: relative interior}, $x^*$ lies in the relative interior of $\operatorname{conv} \left( \{x^S\}_{ S : P \subset S \subset V } \right)$.  
Therefore, we can write\footnote{See, for example, \cite{rockafellar1997convex}.}
\[
x^* = \sum_{ S : P \subset S \subset V } \alpha_S x^{S}
\]
for some $\alpha_S > 0$, with $\sum_{ S : P \subset S \subset V } \alpha_S = 1$.  
We define the segmentation $\sigma$ by $\sigma(x^S) = \alpha_S$.  
We also consider the minimum pricing rule $\underline{\phi}(x)$ and the maximum pricing rule $\overline{\phi}(x)$, as in the proof of Lemma \ref{lem: pi > pi star} and in \cite{bergemann2015limits}.  
Then, we define the pricing rule $\phi$ by
\[
\phi_k(x) = \frac{u}{w^* - \pi^*} \, \underline{\phi}_k(x) + \frac{w^* - \pi^* - u}{w^* - \pi^*} \, \overline{\phi}_k(x).
\]
By \cite{bergemann2015limits}, $(\sigma, \phi)$ yields consumer surplus $u$ and producer surplus $\pi^*$.  
Note that $\phi$ is an optimal pricing rule for $\sigma$.\par

Based on the observation in Section \ref{sec: 観察}, it suffices to show that the constructed set $\{x^k\}$, defined as
\[
x^k \triangleq \frac{1}{\sum_{x \in \operatorname{supp} \sigma} \sigma(x) \phi_k(x)} \sum_{x \in \operatorname{supp} \sigma} \sigma(x) \phi_k(x) \cdot x=\frac{1}{\sum_{ S : P \subset S \subset V } \alpha_S  \phi_k(x^S)} \sum_{ S : P \subset S \subset V } \alpha_S \phi_k(x^S) \cdot x^S,
\]
consists of mutually distinct elements. 
In what follows, we consider cases separately.\par

We first consider the case where $u = 0$.  
If $x^k$ is well-defined (i.e., $\sum_{ S : P \subset S \subset V } \alpha_S \phi_k(x^S) > 0$), then we have $x^k_k > 0$ and $x^k_j = 0$ for all $j > k$.  
Therefore, the elements of $\{x^k\}$ are all mutually distinct.  
Similarly, in the case where $u = w^* - \pi^*$, if $x^k$ is well-defined, we have $x^k_k > 0$ and $x^k_j = 0$ for all $j < k$.  
Hence, $\{x^k\}$ consists of mutually distinct elements in this case as well.  
Thus, in the following, we focus on the case where $0 < u < w^* - \pi^*$.\par

Next, we consider the case where $P$ is a singleton.
Let $v_{k^*} \in P$.  
For any $v_\ell \neq v_{k^*}$, by Lemma \ref{lem: opt price if and if}, the optimal prices for $x^\ell$ are exactly $v_\ell$ and $v_{k^*}$.  
On the other hand, by Lemma \ref{lem: opt price if and if}, the optimal price for $x^{k^*}$ is uniquely $v_{k^*}$.  
Therefore, the elements of $\{x^k\}$ are all mutually distinct.\par

Next, we consider the case where $P$ is not a singleton, that is, $|P| \geq 2$.  
For any $v_k$ such that $v_k < \min P$ or $v_k > \max P$, by Lemma \ref{lem: opt price if and if}, the set of optimal prices for $x^k$ consists exactly of $v_k$ and the elements of $P$.  
Moreover, for any $v_k$ such that $\min P < v_k < \max P$, $x^k$ is not well-defined.
Finally, by Lemma \ref{lem: opt price if and if}, for $x^{\min P}$ and $x^{\max P}$, both have the same set of optimal prices, namely $P$.  
Thus, it suffices to show that $x^{\min P} \neq x^{\max P}$.\par

First, if $v_1 < \min P$, we have $x^{\min P}_1 = 0$ but $x^{\max P}_1 > 0$, so they are distinct.  
Similarly, if $v_K > \max P$, we have $x^{\max P}_K = 0$ but $x^{\min P}_K > 0$, so they are distinct as well.  
Hence, in what follows, we may assume $v_1 = \min P$ and $v_K = \max P$.\par

Since $x^* \neq x^V$, we have $P \neq V$. Hence, we can take any distinct sets $S_1$ and $S_2$ such that $P \subset S_i \subset V$ for $i = 1, 2$.
In the following, we construct a completely different direct segmentation from those considered so far.  
As a preliminary step, we construct $x'$ and $x''$ as follows:
\[
\begin{aligned}
x' =\; &(\alpha_{S_1} + \alpha_{S_2})
\left\{
\left( \frac{\alpha_{S_1}}{\alpha_{S_1} + \alpha_{S_2}} + (1 - \beta)\varepsilon \right) x^{S_1}
+ \left( \frac{\alpha_{S_2}}{\alpha_{S_1} + \alpha_{S_2}} - (1 - \beta)\varepsilon \right) x^{S_2}
\right\} \\
&+ \sum_{ S \neq S_1, S_2 : P \subset S \subset V } \alpha_S x^{S},
\end{aligned}
\]
\[
x'' = (\alpha_{S_1} + \alpha_{S_2}) \left\{ \left( \frac{\alpha_{S_1}}{\alpha_{S_1} + \alpha_{S_2}} - \beta \varepsilon \right) x^{S_1} + \left( \frac{\alpha_{S_2}}{\alpha_{S_1} + \alpha_{S_2}} + \beta \varepsilon \right) x^{S_2} \right\} + \sum_{ S \neq S_1, S_2 : P \subset S \subset V  } \alpha_S x^{S},
\]
for some $\varepsilon > 0$ and $\beta \in (0,1)$.  
Since $\alpha_{S_1} > 0$ and $\alpha_{S_2} > 0$, there exists $\varepsilon > 0$ such that for any $\beta \in (0,1)$, we have $x', x'' \in X$.  
We take such $\varepsilon > 0$.
Note that for any $\beta\in (0,1)$, $\beta x' + (1 - \beta) x'' = x^*$ and $x' \neq x''$.\par

We construct a direct segmentation $\sigma'$ and an associated direct pricing rule $\phi'$ as follows:
\[
\sigma'(x') = \beta, \quad \sigma'(x'') = 1 - \beta, \quad \phi_1'(x') = 1, \quad \phi_K'(x'') = 1.
\]
Note that by Lemma \ref{lem: opt price if and if}, we have $x' \in X_1$ and $x'' \in X_K$.  
Moreover, since $x'' \in X_1$ as well, the producer surplus under $(\sigma', \phi')$ equals $\pi^*$, regardless of the value of $\beta$.
On the other hand, the consumer surplus
\begin{align*}
u'&=\beta\sum_{j=1}^K (v_j - v_1)x_j'\\
&=\beta\left\{\sum_{j=1}^K (v_j - v_1)x_j^*+(\alpha_{S_1} + \alpha_{S_2}) (1 - \beta) \varepsilon \left( \sum_{j=1}^K (v_j - v_1)x_j^{S_1}-\sum_{j=1}^K (v_j - v_1)x_j^{S_2}\right)\right\}
\end{align*}
depends continuously on $\beta \in (0,1)$, and we have
\[
\lim_{\beta \to 0^+} u' = 0, \quad \lim_{\beta \to 1^-} u' = w^* - \pi^*.
\]
Thus, by the intermediate value theorem, there exists some $\beta \in (0,1)$ such that $u' = u$.
\end{proof}

By Theorem \ref{thm: Surplus Triangle in bergemann2015limits}, Lemma \ref{lem: pi > pi star}, and Lemma \ref{lem: change u, no change パイ}, the "if" part of Theorem \ref{thm: direct achievable condition} is also established.

\printbibliography

\end{document}